\documentclass[12pt]{article}
 
\usepackage{amsmath}
\usepackage{theorem}
\usepackage{amssymb}
\usepackage{tikz}

\usepackage[osf,largesc,theoremfont]{newpxtext}
\usepackage[scaled=1]{inconsolata}
\usepackage[euler-digits,euler-hat-accent]{eulervm}
\usetikzlibrary{calc}
\usepackage[applemac]{inputenc}
\usetikzlibrary{positioning}
\usepackage{color}

\newcommand{\N}{\ensuremath{\mathbb{N}}}
\newcommand{\Z}{\ensuremath{\mathbb{Z}}}

\usepackage{theorem}

\newtheorem{theo_bis}{\noindent\bf Theorem}%
\newenvironment{theo}[2][]%
{\begin{theo_bis}{\bf #1} \sl #2}%
{\end{theo_bis}}

\newtheorem{propo_bis}{\noindent\bf Proposition}%
\newenvironment{propo}[2][]%
{\begin{propo_bis}{\bf #1} \sl #2}%
{\end{propo_bis}}

\newtheorem{lemma}{\noindent\bf Lemma}%
{\begin{lemma}{\bf #1} \sl #2}%
{\end{lemma}}

\newtheorem{r_bis}{\noindent\bf Remark}
{\begin{r_bis}{\bf #1} \normalfont #2}%
{\hfill $\Box$\end{r_bis}}

\newtheorem{e_bis}{\noindent\bf Example}
\newenvironment{ex}[2][]%
{\begin{e_bis}{\bf #1} \normalfont #2}%
{\hfill $\Box$\end{e_bis}}

\newtheorem{corollary}{\noindent\bf Corollary}%
{\begin{corollary}{\bf #1} \sl #2}%
{\end{corollay}}

\newenvironment{preuve}[2][]
{\begin{trivlist}{\bf\item Proof.} \textbf{#1} #2} 
{\hfill $\blacksquare$\end{trivlist}}

\long\def\symbolfootnote[#1]#2{\begingroup\def\thefootnote{\fnsymbol{footnote}}\footnote[#1]{#2}\endgroup} 

{\vspace{-0.2cm}\begin{list}%
	{--}%
	{\setlength{\labelwidth}{30pt}%
	 \setlength{\leftmargin}{35pt}%
	 \setlength{\itemsep}{-0.1cm}}}%
{ \end{list}\vspace{-0.2cm} }
\begin{document}

\title{Intergenerational geometric transfers of income\thanks{The second author gratefully acknowledges grant PID2023-146364NB-I00, funded by MCIU/AEI/10.13039/501100011033 and FSE+. This work is part of the R\&D\&I project grant PID2022-137211NB-I00, funded by MCIN/AEI/10.13039/501100011033/, the project ``ERDF A way of making Europe/EU'' and IMUS-María de Maeztu grant CEX2024-001517-M - Apoyo a Unidades de Excelencia María de Maeztu. Financial support from the Universit\'e Jean Monnet, Saint-Etienne, within the program ``Math\'ematiques de la d\'ecision pour l'ing\'enierie physique et sociale'' (MODMAD) is also acknowledged. Eric R\'emila and Philippe Solal would like to thank the hospitality received during their stay in the Universidad de Sevilla (IMUS and Department of Applied Mathematics II).}}

\author{Encarnaci\'on Algaba\thanks{Matem\'atica Aplicada II and Instituto de Matem\'aticas de la Universidad de Sevilla (IMUS), Escuela  Superior de Ingenieros, Camino 
de los Descubrimientos, s/n, 41092 Sevilla, Spain. E-mail: ealgaba@us.es} \\
\and Juan D. Moreno-Ternero\footnote{Department of Economics, Universidad Pablo de Olavide, 41013 Sevilla, Spain, E-mail: jdmoreno@upo.es} \\
\and Eric R\'emila\footnote{Universit\'e de Saint-Etienne, CNRS, GATE Lyon-St-Etienne UMR 5824, F-42023 Saint-Etienne, France. E-mail: eric.remila@univ-st-etienne.fr} \\ 
\and Philippe Solal\footnote{Universit\'e de Saint-Etienne, CNRS, GATE Lyon-St-Etienne UMR 5824, F-42023 Saint-Etienne, France, E-mail: solal@univ-st-etienne.fr} \\}

\maketitle

\begin{abstract}
\noindent We study intergenerational transfers of income. In our stylized model, each generation in an infinite (but countable) stream is endowed with some income. An allocation rule associates with each infinite stream another stream, thus involving intergenerational transfers of income. We single out a family of \textit{geometric} rules as a consequence of imposing axioms formalizing the principles of consistency, continuity and independence (as well as the basic requirements of feasibility and scale invariance). \\

\noindent {\bf Key words}: Intergenerational transfers, geometric rules, infinite streams, consistency, continuity. 
\\

\noindent {\bf JEL classification}: D63, D64, Q56
\end{abstract}
\newpage

\section{Introduction}
Economists, philosophers and social scientists alike have long been concerned with the issue of intergenerational equity. Koopmans (1960) and Diamond (1965) are the seminal contributions within the axiomatic literature on this issue. They both consider a stylized setting in which a social planner aims at ranking different (infinite but countable) streams of welfare, for a stream of generations. Diamond (1965) famously showed that impartiality cannot be combined with being sensitive to the interests of any single generation if the social preferences are required to be complete and continuous. Later contributions established related (and stronger) impossibility results (e.g., Asheim, 2010).
Many hold the view that concepts of welfare across persons are incommensurate (e.g., Rawls, 1971; Moreno-Ternero and Roemer, 2008). If one accepts that view, or simply faces the absence of comparable information about preferences, 
then one should depart from the above 
framework to deal with a more fundamental \textit{resourcist} framework in which each generation is associated with a level of resource (e.g., consumption, income, wealth) rather than welfare.\footnote{The term \textit{resourcist} is borrowed from Roemer (1986).} In such a setting, Epstein (1986) poses the dilemma that an economy has to choose between development and equity. 
Such a dilemma is resolved by Asheim (1991) for some domains of technologies, upon combining altruism with the exclusion of \textit{unjust} allocations. 
In a more general multi-commodity framework, Piacquadio (2014) 
also highlights the conflict between distributional equity and the Pareto principle.

In this paper, we focus on intergenerational transfers of income. That is, we also endorse a resourcist framework, in which we consider (infinite but countable) streams of income, for a stream of generations. In our framework, streams will be labelled according to the set of integer numbers $\mathbb{Z}$, instead of the set of non-negative integer numbers $\mathbb{N}$. That is, both the past and the future are unbounded. 
The aim is to design allocation rules that associate with each stream of income another stream of income, thus involving intergenerational transfers of income. 
Our approach will also be axiomatic and we shall single out a family of \textit{geometric} rules as a result of combining various axioms formalizing principles with normative appeal. On the one hand, we shall consider two basic and standard operational axioms: \textit{feasibility} (allocation rules cannot yield streams with a higher aggregate income than the original stream) and \textit{scale invariance} (which implies that the units in which income is measured are irrelevant). On the other hand, we consider axioms formalizing the principles of independence, consistency and continuity. {\it Independence of a future income} states that changing the income of one generation does not affect the allocation of past generations. {\it Consistency} refers to the scenario in which the allocation is reevaluated in a given generation, assuming past generations have zero income (with the interpretation that they already ``left'' with their awards) and the present generation's income is augmented with the residual income from past generations (the intergenerational transfers they gave), whereas the future generations keep their original incomes. The axiom states that the allocation for the present and future generations should not change in this new scenario. Finally, {\it continuity} is the standard notion formalizing that small changes in the original income stream can only produce small changes in the income stream the rule yields. Nevertheless, various continuity notions can be considered, depending on the norm at stake. In this case, we endorse the standard 1-norm (or taxicab norm). We shall also explore alternative continuity notions arising from the infinity norm and the point-wise norm. A side result of our analysis is that different continuity notions will have different implications, when combined with the rest of the axioms in our analysis. 

Our work is also connected to other strands of the axiomatic literature that have evolved recently. 

Income redistribution problems for finite generations were formally introduced by Ju et al. (2007), as an instance of their generalized claims problems.\footnote{Their model for income redistribution is thus similar to the taxation model studied by Young (1988), which reinterprets the seminal model to study the problem of adjudicating conflicting claims introduced by O'Neill (1982).} They focus on the notion of \textit{reallocation-proofness} (a group of agents cannot manipulate the outcome of the rule upon reallocating the incomes within the group) and show that this axiom, together with \textit{no transfer paradox} (transferring some income before redistribution takes place does not increase income after redistribution) characterizes the family of income-tax schedules with a flat tax rate and personalized lump-sum transfers. Chambers and Moreno-Ternero (2021) studied the bilateral case of this model and showed that \textit{continuity}, 
\textit{no transfer paradox} 
and \textit{idempotency} characterize a large family (dubbed \textit{threshold} rules), that guarantee partial redistribution for unequal incomes. Mart\'{\i}nez and Moreno-Ternero (2022) show that the combination of \textit{equal treatment of equals}, \textit{additivity}, and \textit{idempotency} characterizes a pair of rules: \textit{full redistribution} and \textit{laissez-faire} (the counterpart of the no-transfer rule in our setting). Dismissing the latter axiom, one widens the scope to characterize a family of linear combinations between both rules. A similar family was characterized earlier by Casajus (2015), resorting to a \textit{monotonicity} axiom instead of \textit{additivity}.  

From a different vantage point, Hougaard et al. (2017) studied the problem of revenue allocation in a hierarchy. 
Therein, Hougaard et al. (2017) characterize (uniform) geometric rules by the combination of {\sl lowest rank consistency}, {\sl highest rank revenue independence}, {\sl highest rank splitting neutrality}, and {\sl scale invariance}.\footnote{Harless (2020) and Ju et al. (2024) present related results for the same model. Dietzenbacher and Kondratev (2023) and Kondratev et al. (2024) also characterize geometric rules in the related setting of prize allocation in competitions.} The first axiom in the list, albeit similar, is different from the consistency axiom we use in this setting, as it is a variable-population axiom assuming that an agent in the hierarchy (the one at the bottom, a concept that cannot be defined in our setting) departs. Likewise, the second one in the list could be seen as a counterpart of our independence axiom (although, again, we cannot refer to highest rank because our streams are infinite). Finally, the {\sl highest rank splitting neutrality} axiom is akin to the notion of reallocation-proofness mentioned above, but in a variable-population setting too and, thus, it does not have a counterpart in our setting. 
Hougaard et al. (2022) and Gudmundsson et al. (2025) also consider geometric rules for hierarchies that are generated endogenously and, thus, can be infinite too. But, in that case, the domain of streams would be the set of non-negative integer numbers. Furthermore, the focus is on game induced by allocation rules that distribute the total value generated among agents. The main message of their analysis is that relatively simple rules that prioritize short-term incentives can effectively achieve long-term systemic goals.

Finally, Mart\'{\i}nez and Moreno-Ternero (2025, 2026) study the allocation of riparian rights with a linear structure. This is a resourcist version of the river sharing model introduced by Ambec and Sprumont (2002). And it can also be seen as a mirror image of the model in Hougaard et al. (2017) mentioned above, for linear hierarchies. Mart\'{\i}nez and Moreno-Ternero (2026) characterize in this setting a family of geometric rules combining three axioms: \textit{partial-implementation invariance}, \textit{upstream invariance}, and \textit{scale invariance}. As feasibility is included in the definition of rules in that setting, this result can be seen as a counterpart of our main theorem. The difference between both results essentially lies in the \textit{continuity} axiom we consider here, which plays a crucial role in the analysis of infinite streams, as we shall discuss later in the text. 


\section{The model}
Assume that the set of integers $\mathbb{Z}$ represents the infinite (but countable) set of generations.\footnote{Formally, $\mathbb{Z} = \{\ldots, - 1, 0, 1, \ldots \}$. The symbol $0_{\mathbb{Z}}$ stands for the null stream, i.e., $(0_{\mathbb{Z}})_i = 0$ for each $i \in \mathbb{Z}$.
For each $i \in \mathbb{Z}$, $e^i: \mathbb{Z} \longrightarrow \mathbb{R}_+$ is defined as $e_i^i = 1$ and $e^i_j = 0$ for $j \in \mathbb{Z} \setminus i$. Given $x \in \mathbb{R}$, $| x|$ stands for the absolute value of $x$. } 
The natural interpretation is that $0$ refers to the present generation, whereas negative numbers refer to past generations and positive numbers to future generations. Let $r_i \in \mathbb{R}_{+}$ be the income that generation $i \in \mathbb{Z}$ generates. Denote by $L^1$ the set of functions $r: \mathbb{Z} \longrightarrow \mathbb{R}$, $i \longmapsto r_i $ such that $\sum_{i \in \mathbb{Z}} | r_i | < +\infty$, and by $L_+^1$ the subset of functions $r$ in $L^1$ such that $r_i \in \mathbb{R}_+$ for each $i \in \mathbb{Z}$. An {\bf infinite stream of income} is $r \in  L_+^1$. An {\bf allocation rule} is a function $\phi: L_+^1 \longrightarrow L_+^1$. 
\medskip

The following family of allocation rules, which we shall denote by ${\cal G}^{\lambda}$, will be central in our analysis. 
\medskip

\noindent {\bf Geometric rules  ($\phi^{ \lambda}$)}: There is a function $\lambda: \mathbb{Z} \longrightarrow [0, 1]$,    $i \longmapsto \lambda_i$ such that for each $r \in L^1_+$, and each $i \in \mathbb{Z}$, 
\begin{equation} \label{GR}
\phi_i^{ \lambda} (r)  =   \lambda_i\sum_{j\in \mathbb{Z}:\atop j \leq i} \prod_{k = j}^{i - 1}(1- \lambda_k)r_j.
\end{equation}

A geometric rule represents income transfers from past generations to future generations.
More precisely, the quantity 
$$\prod_{k = j}^{i - 1}(1- \lambda_k)r_j$$ 
is the income transfer from generation $j$ to generation $i$. Each generation $k$ between $j$ and $i-1$ retains a share $\lambda_k$ of the income received from $j$ and transfers the remaining share to the next generation $k+1$. It follows that
  $$\sum_{j\in \mathbb{Z}:\atop j < i} \prod_{k = j}^{i - 1}(1- \lambda_k)r_j$$ 
  is the income transferred by generations $j < i$ to generation i.
The total income of generation $i$ is thus given by:
 $$r_i + \sum_{j\in \mathbb{Z}:\atop j < i} \prod_{k = j}^{i - 1}(1- \lambda_k)r_j.$$ 
 Generation $i$ retains a share $\lambda_i$ of this income and transfers the remaining share to the next generation $i +1$.

\medskip


Suppose $e^0$ is the initial stream of income (that is, the present generation has one unit of income, whereas all other generations have zero). Then, for each $i < 0$, $\phi_i^{\lambda} (e^0) = 0$ and $\phi_0^{\lambda} (e^0) = \lambda_0$. That is, generation $0$ keeps $\lambda_0$ of its unit income and transfers $1 - \lambda_0$ of it 
to the next generation $1$, whose income becomes $1 - \lambda_0$. Therefore, $\phi_1^{\lambda} (e^0) = \lambda_1( 1 - \lambda_0)$. Generation 1 transfers $(1 - \lambda_1)$ of its income to generation 2 whose income becomes  $(1 - \lambda_1) (1 - \lambda_0)$ and so its allocation is given by $\phi_2^{\lambda} (e^0) = \lambda_2(1 - \lambda_1) (1 - \lambda_0)$. Proceeding recursively, one obtains that, for each $i \geq 1$,
$$\phi_i^{\lambda} (e^0) = \lambda_i \prod_{j = 0}^{i - 1} ( 1 - \lambda_j).$$

A {\bf geometric rule} is {\bf uniform} if $\lambda: \mathbb{Z} \longrightarrow [0, 1]$ is a constant function. In other words, with a slight abuse of notation, there exists $\lambda \in [0, 1]$ such that, for each $i \in \mathbb{Z}$, $\lambda_i = \lambda$. Thus, for each $r \in L^1_+$, and each $i \in \mathbb{Z}$, 
\begin{equation}
\label{GR-U}
\phi_i^{ \lambda} (r) 
= \lambda \sum_{j \leq i} (1 - \lambda)^{i - j} r_j.
\end{equation}

We shall denote the resulting family by ${\cal U}^{\lambda}\subset {\cal G}^{\lambda}$.\\

Consider again $e^0$ as the initial stream of income. Then, the uniform geometric rule $\phi^{\lambda}$ yields 
\begin{equation*}
\phi_i^{\lambda} (e^0) =\left\{ 
\begin{tabular}{cc}
$ \lambda ( 1 - \lambda)^i$ & if $i \geq 1 $ \\ 
$\lambda$ & if $i = 0 $\\
$0$ & otherwise.%
\end{tabular}%
\right.
\end{equation*}

Two extreme and polar uniform geometric rules emerge when either $\lambda =0$ or $\lambda =1$. In the former case, each generation transfers all its income to the next generation. In the latter, each generation transfers nothing to the next generation. We shall refer to them as the full-transfer and no-transfer rules, respectively.\footnote{The name \textit{full transfer} refers to the fact that each generation fully transfers its income, albeit no generation receives a transfer. This is a stark contrast with the finite case, in which the last agent receives the aggregate income (from all the transfers).} Formally,\\

\noindent {\bf Full-Transfer rule ($\phi^{0}$)}: for each $r \in L^1_+$,
\begin{equation*} \label{FT}
\phi^{0} (r) = 0_{\mathbb{Z}}.
\end{equation*}

\noindent {\bf No-Transfer rule ($\phi^{1}$)}: for each $r \in L^1_+$,
\begin{equation*} \label{NT}
\phi^{1} (r) = r.
\end{equation*}

\section{Axiomatic analysis}
\subsection{Axioms}
We formalize some principles with normative appeal for allocation rules in this setting. \\

\noindent First, the axiom stating that allocation rules cannot yield streams with higher aggregate income.\medskip 

\noindent {\bf Feasibility}: For each $r \in L^1_+$, 
$$\sum_{i \in \mathbb{Z}} \phi_i (r) \leq \sum_{i \in \mathbb{Z}} r_i. $$

\noindent Second, the strengthening of the previous axiom indicating that allocation rules cannot waste income in the process either. \medskip

\noindent {\bf Balance}: For each $r \in L^1_+$, 
$$\sum_{i \in \mathbb{Z}} \phi_i (r) = \sum_{i \in \mathbb{Z}} r_i. $$


\noindent Next, a standard axiom implying that the units in which we measure incomes is irrelevant.\medskip 

\noindent {\bf Scale invariance}: For each $r \in L^1_+$, and each $\alpha \in \mathbb{R}_+$, 
$$\phi(\alpha r) =  \alpha \phi (r). $$

\noindent We now move to independence axioms, with respect to a specific future generation, or (more strongly) a whole (future) stream.\footnote{These axioms are somewhat reminiscent of the classical notions of intertemporal myopia (e.g., Brown and Lewis, 1981). } \medskip

\noindent {\bf Independence of a future income}: 
Let $r,r'\in L^1_+$ be such that $r_i = r'_i$ for each $i \in \mathbb{Z} \setminus j$. Then, for each $i < j$,
$$\quad \phi_i (r) = \phi_i (r').$$ 
\medskip 

\noindent {\bf Independence of future income streams}: 
Let $r,r'\in L^1_+$ be such that $r_i = r'_i$ for each $i < j$. Then, for each $i < j$,
$$\phi_i (r) = \phi_i (r').$$ 
\medskip

\noindent For the next axiom, we need to introduce a preliminary notion first. 
A generation $i \in \mathbb{Z}$ is {\bf inessential} in $r \in L_+^{1}$ if for each $j \leq i$, $r_j = 0$.\footnote{This is inspired by the concept of inessential player in TU-games with a permission structure (e.g., van den Brink and Gilles, 1996).}\medskip

\noindent {\bf Zero income for inessential generations}: 
For each inessential generation $i$ in 
$r\in L^1_{+}$, 
$$\phi_i (r) = 0. $$

\noindent We now consider an invariance notion referring to translations of incomes.\medskip 

\noindent {\bf Translation invariance}: 
For each pair $r, r' \in L^1_+$ such that $r_{i +1}' = r_i$ for each 
$i \in \mathbb{Z}$, 
$$ \phi_{i} (r) = \phi_{i+1} (r').$$

\noindent The next axiom is a sort of stability condition indicating that allocations cannot be revisited.\medskip

\noindent {\bf Idempotency}: For each $r \in L^1_+$, 
$$\phi (\phi(r)) = \phi(r). $$

\noindent Another stability condition comes next.\footnote{The axiom aligns with the partial implementation axiom introduced by Dietzenbacher et al. (2024) for claims problems. For other notions of consistency and some of the applications this principle has in axiomatic work, the reader is referred to Thomson (2012). See also Moreno-Ternero and Roemer (2006).} 
Given an allocation rule $\phi$, a stream of income $r \in L^1_+$ and a generation $j \in \mathbb{Z}$, define the stream of income $r^j$ as follows:
\begin{equation*}
r_i^j =\left\{ 
\begin{tabular}{cl}
$r_i$ & if $i > j $ \\ 
$r_j + \sum_{i \in \mathbb{Z}: \atop i < j} \bigl(r_i - \phi_i (r) \bigr)$ & if $i = j $ \\
$0$ & otherwise.%
\end{tabular}%
\right.
\end{equation*}
That is, $r^j$ is obtained by assuming that, after the allocation takes place, $j$ adds to its income the residual incomes from all previous generations, whereas those previous generations appear with zero income, and future generations remain with their original incomes. 
Note that $r^j$ does not necessarily belong to $L_+^1$. But if it does, the next requirement says future generations should receive the same in the allocation for this new stream and in the allocation for the previous stream. \\

\noindent {\bf Consistency}: 
For each $r \in L^1_+$ and each $j \in \mathbb{Z}$ such that $r^j \in L^1_+$, 
$$\phi_{i} (r) = \phi_{i} (r^j),$$
for each $i \geq j$. \medskip

\noindent We conclude this inventory of axioms with a basic regularity notion.\footnote{We shall discuss alternative notions of continuity later in the text.}\medskip

Let $(r^m)_{m \geq 0}$ be a sequence of elements of $ L_+^1$. We say that $(r^m)_{m \geq0}$ converges to $r \in  L_+^1$ in the taxicab norm, which we write as $(r^m)_{m \geq0} \overset{1}\to r$, if for each $\epsilon >0$, there exists $m_0 \in \N$, such that for each $m \geq m_0$,  $\sum_{i \in \Z}\vert r^m_i  -r_i \vert  < \epsilon$.  \\

\noindent {\bf Continuity}: If $(r^m)_{m \geq0}\overset{1}\to r $, then $(\phi(r^m))_{m \in \N}\overset{1}\to \phi(r)$.
\subsection{Preliminary results}
Let ${\cal B}^{\lambda}$ denote the sub-family of geometric rules satisfying the following condition:
\begin{equation} \label{c-balance}
\forall i \in \mathbb{Z}, \quad \prod_{k = i }^{+\infty} ( 1 - \lambda_k) = 0.
\end{equation}
To understand (\ref{c-balance}), pick any generation $i$. It transfers $1 - \lambda_i$ of its income $r_i$ to generation $i+1$. Then, generation $i+1$ transfers  $1 - \lambda_{i +1}$ of $1 - \lambda_i$ of $r_i$ to generation $i +2$. Therefore,   $\prod_{k = i }^{ i + p} ( 1 - \lambda_k)$, $p \geq 1$, represents the share of $r_i$ obtained by generation $i + p$. Condition (\ref{c-balance}) 
states that generations distant from $i$ receive an increasingly smaller share of the income $r_i$ and that, ultimately, transfers of $r_i$ no longer take place.
In other words, income $r_i$  is fully used by $i$ or transferred to generations $i+1, i+2, \ldots$ No share of this income is transferred indefinitely without being used for  transfer to a generation. 

Let ${\cal E}^{\lambda}$ denote the sub-family of geometric rules satisfying the following condition: 
 there exists $\epsilon \in (0, 1)$ such that $\lambda_k > \epsilon$ for each $k \in \mathbb{Z}$.  
It follows that ${\cal E}^{\lambda} \subset{\cal B}^{\lambda}$. To see this, note that, for each $k \in \mathbb{Z}$, $(1 - \lambda_k) < 1 - \epsilon$. Thus, $ \lim_{p \to + \infty}(1 - \epsilon)^p = 0,$ from which one deduces that (\ref{c-balance}) holds. 

\begin{ex}
Let $\lambda: \mathbb{Z} \longrightarrow [0, 1]$ be defined as:
\begin{equation*}
\lambda_i=\left\{ 
\begin{tabular}{cc}
$1 - \exp \biggl(- \frac{1}{2^i}\biggr)$ & if $i > 0 $ \\ 
$0$ & otherwise.%
\end{tabular}%
\right.
\end{equation*}
Consequently,
$$\prod_{k = 1}^{+ \infty} ( 1 - \lambda_k) = \exp \biggl( \sum_{k = 1}^{+ \infty} - \frac{1}{2^k}\biggr) =\exp ( - 1).$$
More generally, for each $i > 0$,  
$$\prod_{k = i}^{+ \infty} ( 1 - \lambda_k) =\exp \biggl( - \frac{1}{2^{i -1}} \biggr).$$
Thus, if we consider the corresponding geometric rule $\phi_i^{\lambda}$, it follows that 
$$\sum_{i \in \mathbb{Z}:\atop i < j} \phi_i^{\lambda} (r) =   \sum_{i \in \mathbb{Z}} r_i - \sum_{i > 0} \exp \biggl( - \frac{1}{2^{i -1}} \biggr) r_i,$$
from which one concludes that $\phi_i^{\lambda}$ does not satisfy balance. 
\end{ex}

The geometric rule in the previous example does not belong to ${\cal B}^{\lambda}$. 
As the next result states, all geometric rules satisfy feasibility, but only those within ${\cal B}^{\lambda}$ satisfy balance.

\begin{propo} \label{propBalance} The following statements hold:
  \begin{itemize}
\item All geometric rules satisfy feasibility.
\item A geometric rule 
satisfies balance if and only if it belongs to ${\cal B}^{\lambda}$. 
\end{itemize}
\end{propo}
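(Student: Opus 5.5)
The plan is to compute, for a single unit of income placed at generation $j$, how much of it the geometric rule hands back in total, and then to aggregate over $j$ by linearity. Fix $j \in \mathbb{Z}$ and write $P_j(p) = \prod_{k=j}^{j+p}(1-\lambda_k)$ for $p \geq -1$, with the convention $P_j(-1)=1$. By (\ref{GR}), generation $i \geq j$ receives $\lambda_i \prod_{k=j}^{i-1}(1-\lambda_k) r_j$ out of $r_j$, and generations $i<j$ receive nothing from it. Writing $\lambda_i = 1-(1-\lambda_i)$, the $i$-th contribution equals $\bigl(P_j(i-j-1) - P_j(i-j)\bigr) r_j$. The partial sum over $i = j,\dots,j+p$ therefore telescopes to $\bigl(1 - P_j(p)\bigr) r_j$.

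Since the factors $1-\lambda_k \in [0,1]$, the sequence $P_j(p)$ is nonincreasing and nonnegative. Hence it converges to $\prod_{k=j}^{+\infty}(1-\lambda_k) =: \pi_j \in [0,1]$. It follows that the total amount that $r_j$ generates across all generations is $(1-\pi_j) r_j \leq r_j$.

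Next we sum over $j$. All terms in (\ref{GR}) are nonnegative, so by Tonelli for double series we may interchange the sums over $i$ and $j$. This gives
$$\sum_{i\in\mathbb{Z}} \phi^{\lambda}_i(r) = \sum_{j\in\mathbb{Z}} (1-\pi_j) r_j \leq \sum_{j \in \mathbb{Z}} r_j < +\infty.$$
This proves feasibility. The same bound shows that $\phi^{\lambda}(r)\in L^1_+$, so the rule is well defined; each $\phi_i^\lambda(r)$ is finite because it is bounded by $\sum_j r_j$.

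For balance, the identity above shows that balance holds if and only if $\sum_j \pi_j r_j = 0$ for every $r \in L^1_+$. If the rule belongs to ${\cal B}^{\lambda}$, then every $\pi_j = 0$ by (\ref{c-balance}), and balance follows. Conversely, suppose $\pi_{j} > 0$ for some $j$. Testing balance on $r = e^{j}$, the rule yields a total of $1 - \pi_j < 1$, so balance fails.

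The only delicate step is justifying the interchange of summation order and the limit in the telescoping sum. Nonnegativity and monotone convergence handle both, so I do not anticipate a real obstacle.
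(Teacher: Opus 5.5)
Your proof is correct and follows essentially the paper's route. The paper proves your telescoping identity $\sum_{i=j}^{j+p}\lambda_i\prod_{k=j}^{i-1}(1-\lambda_k)=1-\prod_{k=j}^{j+p}(1-\lambda_k)$ by induction (its Lemma 1), interchanges the order of summation to obtain $\sum_i \phi_i^{\lambda}(r)=\sum_j\bigl(1-\prod_{k\ge j}(1-\lambda_k)\bigr)r_j$ (its Lemma 2), and reads off feasibility and the balance characterization. Your version is slightly more careful: you justify the interchange by nonnegativity, you check that $\phi^{\lambda}(r)\in L^1_+$, and you make the converse explicit by testing balance on $e^j$.
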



The next results state that all geometric rules satisfy continuity and that this, together with the previous result, will have further implications for some of the remaining axioms listed above. 

\begin{propo} \label{Lip-continuous} All geometric rules satisfy continuity. 
\end{propo}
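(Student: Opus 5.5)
Our plan is to show that every geometric rule is $1$-Lipschitz in the taxicab norm. Continuity then follows immediately. The key observation is that formula (\ref{GR}) makes sense for any $r \in L^1$, not only for $r \in L^1_+$, and that it is linear in $r$. So we extend $\phi^{\lambda}$ to all of $L^1$ by the same formula. For $r, r' \in L^1_+$ we then have $\phi^{\lambda}(r) - \phi^{\lambda}(r') = \phi^{\lambda}(r - r')$, provided the defining series converge absolutely. That is exactly what the bound in the next step gives.

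The central estimate is that for any $s \in L^1$,
\begin{equation*}
\sum_{i \in \mathbb{Z}} \bigl| \phi^{\lambda}_i(s) \bigr| \leq \sum_{i \in \mathbb{Z}} \lambda_i \sum_{j \leq i} \prod_{k=j}^{i-1}(1-\lambda_k) |s_j| = \sum_{i \in \mathbb{Z}} \phi^{\lambda}_i(|s|),
\end{equation*}
where $|s|$ denotes the stream $(|s_j|)_{j \in \mathbb{Z}}$, which lies in $L^1_+$. The inequality is the triangle inequality applied term by term. By the first statement of Proposition \ref{propBalance} (feasibility), the right-hand side is at most $\sum_{j} |s_j|$.

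To keep the argument self-contained, we would also note the telescoping identity behind feasibility. For each fixed $j$ and each $N \geq j$,
\begin{equation*}
\sum_{i=j}^{N} \lambda_i \prod_{k=j}^{i-1}(1-\lambda_k) = 1 - \prod_{k=j}^{N}(1-\lambda_k) \leq 1.
\end{equation*}
Interchanging the sums over $i$ and $j$ is legitimate by Tonelli, since all terms are nonnegative. This interchange is also what guarantees absolute convergence of the series defining $\phi^{\lambda}_i(s)$, which justifies the linearity used above.

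Applying the estimate with $s = r^m - r$ gives $\|\phi^{\lambda}(r^m) - \phi^{\lambda}(r)\|_1 \leq \|r^m - r\|_1$. Hence $(r^m)_{m \geq 0} \overset{1}\to r$ implies $(\phi^{\lambda}(r^m))_{m \geq 0} \overset{1}\to \phi^{\lambda}(r)$. The only delicate point is justifying the linear extension to signed streams and the interchange of the double sum. Both reduce to the nonnegative case via Tonelli and the feasibility bound, so we expect no real obstacle.
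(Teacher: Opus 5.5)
Your proposal is correct and follows essentially the same route as the paper: the paper also proves the $1$-Lipschitz bound $\sum_{i}|\phi^{\lambda}_i(r)-\phi^{\lambda}_i(r')|\leq\sum_i|r_i-r'_i|$ by applying the triangle inequality term by term and then invoking feasibility on the stream $z_i=|r_i-r'_i|$. Your linear extension to signed streams and the Tonelli justification only make explicit what the paper leaves implicit; each inner series converges absolutely anyway, since the products lie in $[0,1]$ and $r\in L^1$.
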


\begin{propo} \label{relationaxiom} 
  The following statements hold:
\begin{enumerate}
\item 
Independence of a future income and continuity imply independence of future income streams; 
\item  Independence of future income streams and feasibility imply zero income for inessential generations.
\end{enumerate}
\end{propo}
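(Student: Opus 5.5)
Both parts go by approximation and by comparison with a suitably chosen stream.

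For the first statement, take $r,r'\in L^1_+$ with $r_i=r'_i$ for each $i<j$. I will change the coordinates at positions $\geq j$ one at a time. For each $n\geq 0$, define $s^n$ to agree with $r'$ on $\{j,\dots,j+n\}$ and with $r$ elsewhere. Then $s^0$ differs from $r$ only at coordinate $j$, and $s^{n}$ differs from $s^{n-1}$ only at coordinate $j+n$. Since every $i<j$ satisfies $i<j+n$, independence of a future income applied $n+1$ times gives $\phi_i(s^n)=\phi_i(r)$ for each $i<j$ and each $n$. Next, $\sum_{k}|s^n_k-r'_k| = \sum_{k>j+n}|r_k-r'_k|\leq \sum_{k>j+n}(r_k+r'_k)$, which tends to $0$ because $r,r'\in L^1$. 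Hence $(s^n)\overset{1}\to r'$. By continuity, $(\phi(s^n))\overset{1}\to\phi(r')$, and convergence in the taxicab norm implies coordinatewise convergence. So for $i<j$ we get $\phi_i(r')=\lim_n\phi_i(s^n)=\phi_i(r)$. The only point needing care is that each intermediate $s^n$ lies in $L^1_+$, which is clear, and that the finite induction is legitimate. The limiting step is exactly where continuity is needed, because infinitely many coordinates may differ.

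For the second statement, let $i$ be inessential in $r$, so $r_k=0$ for every $k\leq i$. Compare $r$ with the null stream $0_{\mathbb{Z}}$. The two streams agree on all coordinates $k<i+1$, so independence of future income streams applied with $j=i+1$ gives $\phi_k(r)=\phi_k(0_{\mathbb{Z}})$ for all $k\leq i$, in particular $\phi_i(r)=\phi_i(0_{\mathbb{Z}})$. Feasibility at $0_{\mathbb{Z}}$ gives $\sum_k\phi_k(0_{\mathbb{Z}})\leq 0$. Since $\phi$ takes values in $L^1_+$, all coordinates are nonnegative, so $\phi(0_{\mathbb{Z}})=0_{\mathbb{Z}}$. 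Hence $\phi_i(r)=0$.

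I expect no real obstacle. The only subtle step is in the first part: one must approximate $r'$, not $r$, by the finitely modified streams. Each modification must be at a coordinate $\geq j$, so that the single-coordinate axiom applies to all $i<j$ at every stage.
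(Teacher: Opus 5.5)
Your proof is correct and follows essentially the same route as the paper: your streams $s^n$ are the paper's approximating sequence (agreeing with $r'$ up to a moving cutoff and with $r$ beyond it), combined with repeated single-coordinate independence and a passage to the limit via continuity; your second part is the same comparison with $0_{\mathbb{Z}}$, using feasibility and nonnegativity. Your explicit choice of $j=i+1$ when invoking independence of future income streams is a welcome precision, since the paper applies that axiom directly to generation $j$ without spelling out the index shift.
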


The proofs of the previous propositions are in the appendix. We are now ready to state our main characterization result.

\subsection{The main result}

\begin{theo} \label{charaResult} An allocation rule satisfies feasibility, scale invariance, independence of a future income, consistency and continuity if and only if it is a geometric rule. 
\end{theo}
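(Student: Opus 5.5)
The plan has two halves. For the ``if'' part, I check the five axioms for an arbitrary geometric rule $\phi^\lambda$, relying on the earlier propositions where they apply. For the ``only if'' part, I take a rule $\phi$ satisfying the axioms and construct $\lambda$ from its values on the unit streams $e^j$. I then show $\phi=\phi^\lambda$ on three successively larger sets: unit streams, finitely supported streams, and all of $L^1_+$.

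\emph{Geometric rules satisfy the axioms.} Feasibility is Proposition~\ref{propBalance} and continuity is Proposition~\ref{Lip-continuous}. Scale invariance holds because (\ref{GR}) is linear in $r$. Independence of a future income holds because $\phi^\lambda_i(r)$ involves only $r_j$ with $j\le i$. The real work is consistency. Let $T_j=\sum_{l<j}\prod_{k=l}^{j-1}(1-\lambda_k)r_l$ be the inflow into generation $j$, so that $\phi^\lambda_i(r)=\lambda_i(r_i+T_i)$ and $T_{i+1}=(1-\lambda_i)(r_i+T_i)$. Then $r_i-\phi^\lambda_i(r)=T_{i+1}-T_i$. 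Summing over $i$ from $-n$ to $j-1$ telescopes to $T_j-T_{-n}$. Moreover $0\le T_{-n}\le\sum_{l<-n}r_l\to0$, since $r\in L^1$. Hence
\[
r^j_j=r_j+T_j\ge 0,
\]
so $r^j\in L^1_+$. Applying the recursion to $r^j$, the inflow into $j$ is $0$ and the ``income'' at $j$ is again $r_j+T_j$, and from there on the recursion is identical. Therefore $\phi^\lambda_i(r^j)=\phi^\lambda_i(r)$ for all $i\ge j$.

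\emph{Uniqueness, step 1: unit streams.} Let $\phi$ satisfy the axioms. By Proposition~\ref{relationaxiom}, $\phi$ satisfies independence of future income streams and zero income for inessential generations. Define $\lambda_j=\phi_j(e^j)$. By feasibility and nonnegativity $\lambda_j\in[0,1]$, and by the inessential-generation axiom $\phi_i(e^j)=0$ for $i<j$. Apply consistency to $r=e^j$ at generation $j+1$. Then $r^{j+1}=(1-\lambda_j)e^{j+1}$, which lies in $L^1_+$. So for $i\ge j+1$,
\[
\phi_i(e^j)=\phi_i\bigl((1-\lambda_j)e^{j+1}\bigr)=(1-\lambda_j)\phi_i(e^{j+1}),
\]
using scale invariance. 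Iterating gives $\phi(e^j)=\phi^\lambda(e^j)$ for every $j$.

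\emph{Step 2: finitely supported streams.} I argue by induction on the length $b-a$ of the support interval $[a,b]$. If $i<a$, then $\phi_i(r)=0$ because $i$ is inessential. For $i=a$, independence of future income streams gives $\phi_a(r)=\phi_a(r_ae^a)=\lambda_ar_a$. Next apply consistency at $a+1$. The stream $r^{a+1}$ equals $r$ on $(a+1,b]$, is $0$ below $a+1$, and carries $r_{a+1}+(1-\lambda_a)r_a\ge0$ at $a+1$. Its support is strictly shorter, so by induction $\phi_i(r)=\phi_i(r^{a+1})=\phi^\lambda_i(r^{a+1})$ for $i\ge a+1$. By the consistency of $\phi^\lambda$ already proved, this equals $\phi^\lambda_i(r)$. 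Note that this route avoids needing additivity, which is not among the axioms.

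\emph{Step 3: all of $L^1_+$.} For general $r\in L^1_+$, the truncations of $r$ to $[-n,n]$ converge to $r$ in the taxicab norm. Continuity of $\phi$ and of $\phi^\lambda$ then gives $\phi(r)=\phi^\lambda(r)$.

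I expect the main obstacle to be the consistency verification for geometric rules. That is where the infinite past enters: the telescoping identity needs $T_{-n}\to0$, and one must confirm $r^j\in L^1_+$. The uniqueness argument also leans on this property in Step 2.
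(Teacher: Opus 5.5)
Your proof is correct. Its architecture matches the paper's: fix $\lambda$ from unit streams, extend by consistency to streams with restricted support, then reach all of $L^1_+$ by truncation and continuity. The uniqueness induction, however, is organized differently.

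In the paper, Step~1 is just the definition $\lambda_i=\phi_i(e^i)$. Step~2 then handles streams whose support is merely bounded below, one generation at a time. Assuming agreement on all $\ell<i$, it evaluates the residual via Lemma~\ref{Sum} and applies consistency at $i$ itself. It then uses independence of future income streams to collapse the reduced stream to the one-point stream $\bigl(r_i+\sum_{\ell<i}\prod_{k=\ell}^{i-1}(1-\lambda_k)r_\ell\bigr)e^i$, and scale invariance finishes.

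You instead extract the full profile $\phi(e^j)=\phi^\lambda(e^j)$ already in Step~1 by iterating consistency. This shows directly how consistency and scale invariance force the geometric decay. In Step~2 you reduce at $a+1$ to shrink the support, then transfer back using the consistency of $\phi^\lambda$ proved in the first half. So the ``if'' part does double duty, and the residual never has to be recomputed through Lemma~\ref{Sum}. The cost is restricting to finite support, which is harmless given Step~3.

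Similarly, your recursion $T_{i+1}=(1-\lambda_i)(r_i+T_i)$, together with $r_i-\phi^\lambda_i(r)=T_{i+1}-T_i$, replaces the paper's interchange of summations. It also makes explicit where summability at $-\infty$ enters, namely $T_{-n}\to 0$.

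Two points should be stated explicitly when you write it up:
\begin{itemize}
\item The base case $b=a$ must come from Step~1 plus scale invariance. Reducing a one-point stream at $a+1$ yields another one-point stream $(1-\lambda_a)r_ae^{a+1}$, so the support does not shrink there.
\item The step $\phi^\lambda_i(r^{a+1})=\phi^\lambda_i(r)$ is legitimate only because the reduced stream at $a+1$ is the same whether computed with $\phi$ or with $\phi^\lambda$. This holds because the two rules already agree on all generations $\le a$.
\end{itemize}
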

\begin{preuve} 
By Proposition \ref{propBalance} and Proposition \ref{Lip-continuous}, all geometric rules satisfy \textit{feasibility} and \textit{continuity}. It is straightforward to show that they also satisfy \textit{scale invariance} and \textit{independence of a future income}. As for \textit{consistency}, let $r \in L^1_+$ and $j \in \mathbb{Z}$. Then,
\begin{eqnarray}
r_j^j &=& r_j + \sum_{\ell \in \mathbb{Z}: \atop \ell < j}  (r_{\ell} - \phi_{\ell} ^{\lambda} (r)) \nonumber \\
       & =& r_j + \sum_{i \in \mathbb{Z}: \atop \ell < j}  r_{\ell} - \sum_{\ell \in \mathbb{Z}:\atop \ell < j} \bigl( 1 -   \prod_{k = \ell}^{j - 1}  ( 1 - \lambda_k)\bigr)r_{\ell} \nonumber \\
      & = &  r_j + \sum_{\ell \in \mathbb{Z}:\atop \ell < j} \prod_{k = \ell}^{j - 1}  ( 1 - \lambda_k)r_{\ell} \nonumber \\
      &= &\sum_{\ell \in \mathbb{Z}:\atop \ell \leq j} \prod_{k = \ell}^{j - 1}  ( 1 - \lambda_k)r_{\ell}, \nonumber
\end{eqnarray} 
where the second equality comes from Lemma \ref{Sum} (in the appendix). Note that $r^j \in L^1_+$. Therefore, using the definition of $r^j$ and the above equality, one obtains that, for each $i \geq j$, 
\begin{eqnarray}
\phi_i^{\lambda}(r^j) & = & \lambda_i \biggl(\sum_{\ell \in \mathbb{Z}: \atop \ell \leq i} \prod_{k = \ell}^{i - 1} ( 1 - \lambda_k)r_{\ell}^{j} \biggr) \nonumber \\
                                                                  & = & \lambda_i \biggl(\sum_{\ell \in \mathbb{Z}: \atop j < \ell \leq i} \prod_{k = \ell}^{i - 1} ( 1 - \lambda_k)r_{\ell}^{} +  \prod_{k = j}^{i - 1}  ( 1 - \lambda_k)r_{j}^j \biggr) \nonumber \\
                                                                  &= &  \lambda_i \biggl(\sum_{\ell \in \mathbb{Z}: \atop j < \ell \leq i} \prod_{k = \ell}^{i - 1} ( 1 - \lambda_k)r_{\ell} +\prod_{k = j}^{i - 1}  ( 1 - \lambda_k)
                                                                 \bigl (\sum_{\ell \in \mathbb{Z}:\atop \ell \leq j} \prod_{k = \ell}^{j - 1}  ( 1 - \lambda_k)r_{\ell} \bigl) \biggr) \nonumber \\
                                                                 &= &\lambda_i \biggl(\sum_{\ell \in \mathbb{Z}: \atop j < \ell \leq i} \prod_{k = \ell}^{i - 1} ( 1 - \lambda_k)r_{\ell} + \sum_{\ell \in \mathbb{Z}:\atop \ell \leq j} \prod_{k = \ell}^{i- 1}  ( 1 - \lambda_k)r_{\ell}  \biggr) \nonumber \\
 &= & \phi_i^{\lambda}  (r), \nonumber                                                               
\end{eqnarray}
as desired. 
\medskip

Conversely, let $\phi$ be an allocation rule satisfying all the axioms in the statement. For each $i \in \mathbb{Z}$, let $\lambda_i=\phi_i(e^i)$. By feasibility, $\lambda_i \in [0, 1]$. Let $\lambda: \mathbb{Z} \longrightarrow [0, 1]$ be the corresponding stream. We show that $\phi = \phi^{\lambda}$. 
To do so, we proceed in several steps.  
\medskip

\noindent \text{\bf Step 1}. For each $i \in \mathbb{Z}$, $\phi_i(e^i) = \phi_i^{\lambda}(e^i)$. It trivially follows from the above. 
\medskip

\noindent \text{\bf Step 2}. For each $r \in L^1_+$ such that there exists $j \in \mathbb{Z}$ where $r_i  = 0$ for each $i \leq j$, $\phi_i(r) = \phi^{\lambda}(r)$. \medskip

Let $r$ be one of those streams. By Proposition \ref{relationaxiom}, $\phi$ satisfies \textit{zero income for inessential generation}. Then, for each $i \leq j$,
$$\phi_{i} (r) = 0 = \phi_i^{\lambda}(r).$$
For each $i > j$, we proceed by induction on $i - j \geq 1$. 

\noindent {\sc Base Case}: $i - j = 1$. That is, $i = j +1$. By Proposition \ref{relationaxiom}, $\phi$ satisfies \textit{independence of future income streams}. Then,
$$\phi_i (r) = \phi_i (r_i e_i).$$
By Step 1 and scale invariance,
$$ \phi_i (r) =\phi_i (r_i e_i) = r_i \lambda_i = \phi_i^{\lambda}(r),$$
as desired.

\noindent {\sc Induction hypothesis}. Assume that it holds for each generation $\ell$ such that $j < \ell \leq i - 1$ and $i - 1 - j \geq 1$.

\noindent {\sc Induction step}. Consider generation $i$. By the induction hypothesis, $ \phi^{}_{\ell} (r)= \phi_{\ell}^{\lambda}(r)$ for each $\ell < i$. Therefore, by Lemma \ref{Sum},
$$\sum_{\ell \in \mathbb{Z}: \atop \ell < i} r_{\ell} - \sum_{\ell \in \mathbb{Z}: \atop \ell < i} \phi_{\ell}^{\lambda}(r) =\sum_{\ell \in \mathbb{Z}: \atop \ell < i} \prod_{k = \ell}^{i - 1} (1 - \lambda_k) r_{\ell}.$$
By \textit{consistency} and \textit{independence of future income streams},
\begin{eqnarray}
\phi^{}_{i} (r) &= & \phi^{}_{i} \biggl(\bigl( r_i +  \sum_{\ell \in \mathbb{Z}: \atop \ell < i} \prod_{k = \ell}^{i - 1} (1 - \lambda_k) r_{\ell} \bigr) e^i \biggr). \nonumber
\end{eqnarray}
Similarly,
$$\phi^{\lambda}_{i} (r) =  \phi^{\lambda}_{i} \biggl(\bigl( r_i +  \sum_{\ell \in \mathbb{Z}: \atop \ell < i} \prod_{k = \ell}^{i - 1} (1 - \lambda_k) r_{\ell} \bigr) e^i \biggr). $$
By Step 1, \textit{scale invariance} and the definition of $\phi^{\lambda}$,
\begin{eqnarray}
 \phi^{}_{i} \biggl(\bigl( r_i +  \sum_{\ell \in \mathbb{Z}: \atop \ell < i} \prod_{k = \ell}^{i - 1} (1 - \lambda_k) r_{\ell} \bigr) e^i \biggr) &= & \lambda_i \bigl( r_i +  \sum_{\ell \in \mathbb{Z}: \atop \ell < i} \prod_{k = \ell}^{i - 1} (1 - \lambda_k) r_{\ell} \bigr) \nonumber \\
&=& \phi_i^{\lambda} \biggl( \bigl( r_i +  \sum_{\ell \in \mathbb{Z}: \atop \ell < i} \prod_{k = \ell}^{i - 1} (1 - \lambda_k) r_{\ell} \bigr) e^i \biggr) \nonumber \\
 &=& \phi_{i}^{\lambda} (r), \nonumber 
 \end{eqnarray}
 as desired. 
\medskip

\noindent{\bf Step 3}. For each $r \in L^1_+$, $\phi(r) = \phi^{\lambda}(r)$.

Let $r \in L^1_+$ and construct the sequence $(r^m)_{m > 0}$ as follows: 
\begin{equation*}
r_j^m=\left\{ 
\begin{tabular}{cc}
$r_j$ & if $j \geq - m $ \\ 
$0$ & otherwise.%
\end{tabular}%
\right.
\end{equation*}
By Step 2, for each $m > 0$, $\phi(r^m) = \phi^{\lambda} (r^m).$
As $(r^m)_{m > 0}$ converges (with the $L^1$ norm) to $r$, it follows by \textit{continuity} of both $\phi^{}$ and $\phi^{\lambda}$, that  $\phi^{} (r) =  \phi^{\lambda} (r)$, which concludes the proof. 
\end{preuve}

Adding \textit{translation invariance} to the statement of Theorem \ref{charaResult}, one obtains a characterization of the subfamily of uniform geometric rules.
To see this, consider the infinite income streams $e^i$ and $e^{i+1}$ for some $i \in \mathbb{Z}$. By definition of $\phi^{\lambda}$ and \textit{translation invariance},
$$\phi_i^{\lambda} (e^i) = \lambda_i =  \phi_{i+1}^{\lambda} (e^{i+1}) = \lambda_{i +1}, $$
which forces $\lambda: \mathbb{Z} \longrightarrow [0, 1]$ to be a constant function.

\begin{corollary}
An allocation rule satisfies feasibility, scale invariance, independence of a future income, consistency, continuity and translation invariance if and only if it is a uniform geometric rule. 
Furthermore, it satisfies balance if and only if $\lambda \in (0, 1]$.
\end{corollary}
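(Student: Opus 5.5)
The corollary has two parts: a characterization of uniform geometric rules, and a balance criterion within that family. My plan is to obtain the first part from Theorem \ref{charaResult} together with the translation-invariance computation given just before the statement. I will obtain the second part from Proposition \ref{propBalance}.

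\emph{Sufficiency.} A uniform geometric rule $\phi^{\lambda}$ is in particular a geometric rule. By Theorem \ref{charaResult} it therefore satisfies feasibility, scale invariance, independence of a future income, consistency and continuity. It remains to check translation invariance. Let $r'_{i+1}=r_i$ for each $i$. Using (\ref{GR-U}) and reindexing $j'=j+1$, I get
$$\phi^{\lambda}_{i+1}(r')=\lambda\sum_{j'\leq i+1}(1-\lambda)^{i+1-j'}r'_{j'}=\lambda\sum_{j\leq i}(1-\lambda)^{i-j}r_j=\phi^{\lambda}_i(r).$$

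\emph{Necessity.} Let $\phi$ satisfy all six axioms. By Theorem \ref{charaResult}, $\phi=\phi^{\lambda}$ for some $\lambda:\mathbb{Z}\to[0,1]$, and $\lambda_i=\phi_i(e^i)$. Note that $e^{i+1}$ is the translate of $e^i$ in the sense of the axiom, since $(e^{i+1})_{k+1}=(e^i)_k$ for each $k$. Translation invariance then gives $\lambda_i=\phi_i(e^i)=\phi_{i+1}(e^{i+1})=\lambda_{i+1}$ for every $i$. Hence $\lambda$ is constant and the rule is uniform.

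\emph{Balance.} By Proposition \ref{propBalance}, $\phi^{\lambda}$ satisfies balance if and only if $\prod_{k=i}^{+\infty}(1-\lambda_k)=0$ for all $i$. For constant $\lambda$ this product equals $\lim_{p\to\infty}(1-\lambda)^p$. This limit is $0$ exactly when $1-\lambda<1$, i.e. $\lambda>0$; when $\lambda=0$ every factor is $1$ and the limit is $1$. So balance holds if and only if $\lambda\in(0,1]$, where $\lambda=1$ gives the no-transfer rule, which trivially balances. I expect no real obstacle. The only care needed is the index shift in checking translation invariance and the boundary case $\lambda=0$.
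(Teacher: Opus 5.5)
Your proposal is correct and follows the paper's argument. Theorem \ref{charaResult} reduces the problem to geometric rules, translation invariance applied to the pair $e^i$, $e^{i+1}$ forces $\lambda_i=\lambda_{i+1}$, and Proposition \ref{propBalance} combined with $\lim_{p\to\infty}(1-\lambda)^p$ gives the balance criterion. Your explicit reindexing check that uniform geometric rules are translation invariant fills in the sufficiency step that the paper leaves implicit.
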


Adding \textit{idempotency} instead, 
one obtains a sharper characterization, as only the two extreme uniform geometric rules satisfy that axiom. 

\begin{corollary}
An allocation rule satisfies feasibility, scale invariance, independence of a future income, consistency, continuity and idempotency if and only if it is either the no-transfer rule or the full-transfer rule. 
Furthermore, it satisfies balance if and only if it is the no-transfer rule.
\end{corollary}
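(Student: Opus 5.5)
The plan is to reduce everything to Theorem \ref{charaResult} and then use idempotency only on the unit streams $e^i$. For sufficiency, the no-transfer rule $\phi^{1}$ (identity) and the full-transfer rule $\phi^{0}$ (constant $0_{\mathbb{Z}}$) are the geometric rules with $\lambda\equiv 1$ and $\lambda\equiv 0$. Theorem \ref{charaResult} then gives feasibility, scale invariance, independence of a future income, consistency and continuity for both. Idempotency is immediate, since $\phi^1(\phi^1(r))=r=\phi^1(r)$ and $\phi^0(\phi^0(r))=\phi^0(0_{\mathbb{Z}})=0_{\mathbb{Z}}$.

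For necessity, take $\phi$ satisfying all the listed axioms. By Theorem \ref{charaResult}, $\phi=\phi^{\lambda}$ with $\lambda_i=\phi_i(e^i)\in[0,1]$. I will then compute $\phi^{\lambda}(\phi^{\lambda}(e^i))$ at coordinate $i$. The stream $\phi^{\lambda}(e^i)$ vanishes below $i$ and equals $\lambda_i$ at $i$, and $\phi^{\lambda}_i$ only reads coordinates $\le i$. Hence $\phi_i^{\lambda}(\phi^{\lambda}(e^i))=\lambda_i\cdot\lambda_i$. Idempotency gives $\lambda_i^2=\lambda_i$, so $\lambda_i\in\{0,1\}$ for every $i\in\mathbb{Z}$.

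The main obstacle is the final step, going from $\lambda\in\{0,1\}^{\mathbb{Z}}$ to $\lambda$ constant, and I expect it cannot be done with the listed axioms alone. Take $\lambda$ with values in $\{0,1\}$ that is not constant, for instance $\lambda_0=0$ and $\lambda_i=1$ for $i\neq 0$. Then $\phi^{\lambda}(r)$ moves $r_0$ onto generation $1$ and leaves every other coordinate unchanged. Applying $\phi^{\lambda}$ again changes nothing. More generally, every generation with $\lambda=0$ passes its whole income to the next generation with $\lambda=1$, which keeps it. The output is therefore supported on generations with $\lambda=1$, so it is a fixed point, and any $\{0,1\}$-valued $\lambda$ gives an idempotent geometric rule. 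So the corollary needs the constancy of $\lambda$ from the preceding corollary, i.e.\ \emph{translation invariance}. This matches the sentence before the statement, which refers to ``the two extreme \emph{uniform} geometric rules'', though the statement as worded does not list that axiom. I would therefore prove the statement with translation invariance added to the hypotheses, and note explicitly that without it the conclusion fails. With translation invariance, the argument preceding the first corollary gives $\lambda_i=\lambda_{i+1}$ for all $i$, so $\lambda\equiv 0$ or $\lambda\equiv 1$. Without translation invariance, the correct conclusion is that $\phi$ is a geometric rule with $\lambda\in\{0,1\}^{\mathbb{Z}}$.

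For the balance clause, $\phi^{1}$ clearly satisfies balance. The rule $\phi^{0}$ violates it at $r=e^0$, since $\sum_i\phi^0_i(e^0)=0\neq 1$. Equivalently, by Proposition \ref{propBalance}, $\lambda\equiv 0$ fails condition (\ref{c-balance}), because $\prod_{k\ge i}(1-0)=1$.
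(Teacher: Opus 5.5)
Your diagnosis is correct, and the gap lies in the paper rather than in your proposal. Your counterexample is valid. For $\lambda_0=0$ and $\lambda_i=1$ ($i\neq 0$), $\phi^{\lambda}$ is a geometric rule, so it satisfies the five axioms of Theorem~\ref{charaResult}. It is idempotent, as you argue. Since $\phi^{\lambda}(e^0)=e^1$, it is neither $\phi^0$ nor $\phi^1$. So the statement as written is false.

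The paper's whole argument is the remark that ``only the two extreme uniform geometric rules satisfy that axiom''. This is true inside ${\cal U}^{\lambda}$, where $\phi_i^{\lambda}(\phi^{\lambda}(e^i))=\lambda^2$. But it silently assumes uniformity, i.e.\ translation invariance, which is exactly the hypothesis you add. With that hypothesis, your proof is the paper's intended argument made explicit. For the amended ``if'' direction you should also note that $\phi^0$ and $\phi^1$ are trivially translation invariant.

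Your computation $\phi_i^{\lambda}(\phi^{\lambda}(e^i))=\lambda_i^2$ does not use uniformity. Combined with your fixed-point argument, it gives the correct version of the unamended statement: the six listed axioms characterize exactly the geometric rules with $\lambda\in\{0,1\}^{\mathbb{Z}}$.

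One addition: the ``Furthermore'' clause also fails without translation invariance, and your own example shows it. Every tail $\{k\geq i\}$ contains some $k$ with $\lambda_k=1$, so (\ref{c-balance}) holds. By Proposition~\ref{propBalance}, that rule is therefore balanced, although it is not the no-transfer rule. In general, a geometric rule with $\lambda\in\{0,1\}^{\mathbb{Z}}$ is balanced if and only if $\{k\in\mathbb{Z}:\lambda_k=1\}$ is unbounded above. Once translation invariance forces $\lambda$ to be constant, this reduces to your conclusion that only $\phi^{1}$ is balanced.
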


\section{Further insights}
Our previous results rely on a specific notion of continuity. We show in this section that natural alternative notions of continuity yield significantly different implications, highlighting the role of this principle in our analysis. This is in line with classical work in infinite-population ethics (e.g., Svensson, 1980). 
\medskip

We first consider the notion arising when replacing the taxicab norm by the sup-norm as follows. 
Let $(r^m)_{m \geq0}$ be a sequence within $L_+^1$. We say that $(r^m)_{m\geq 0}$ converges to $r \in  L_+^1$ in the sup-norm, and write $(r^m)_{m\geq 0}\overset{\infty}\to r$ if for each $\epsilon >0$, there exists $m_0 \in \N$ such that for each $m \geq m_0$, $\sup_{i \in \Z}\vert r^m_i  -r_i \vert  < \epsilon$.\\

\noindent {\bf Sup-continuity}: If $(r^m)_{m\geq 0}\overset{\infty}\to r$, then $(\phi(r^m))_{m \geq 0}\overset{\infty}\to \phi(r)$.\\

%

Note that sup-continuity is not logically related to continuity (albeit convergence in the taxicab norm implies convergence in the sup-norm). 
It turns out that not all geometric rules satisfy sup-continuity. To see this, consider the subfamily of geometric rules $\phi^{\lambda} \in {\cal G}^{\lambda}$ for which there exists $i < 0$ such that $\lambda_i = 0$, and such that $\lambda_0 = 1$. Define the sequence $(r^m)_{m\geq 0}$ as follows:
\begin{equation*}
r_i^m=\left\{ 
\begin{tabular}{cl}
$\frac{1}{m}$ & if $- m\leq i \leq - 1 $ \\ 
$0$ & otherwise.%
\end{tabular}%
\right.
\end{equation*}
Note that $(r^m)_{m \geq 0}\overset{\infty}\to0_{\mathbb{Z}}$. However, 
for each $m \geq 0$, $\phi_0^{\lambda}(r^m) = 1$, and, for each $i \in \mathbb{Z} \setminus 0$, $\phi_i^{\lambda} (r^m) = 0$. That is, $(\phi^{\lambda}(r^m))_{m \geq 0}$ is a constant sequence, different from $0_{\mathbb{Z}}$. Thus, $\phi^{\lambda}$ does not satisfy sup-continuity.
\medskip

The natural question arising from here is whether we can identify the subset of geometric rules satisfying sup-continuity. The next result answers positively such a question. 
For each geometric rule $\phi^{\lambda}$, and each $i \in \mathbb{Z}$, let
$$S^{\lambda}_i  =  \lambda_i  \sum_{j \in \mathbb{Z}: \atop j \leq i} \prod_{k = j}^{i - 1} ( 1 - \lambda_k).  $$
To interpret $S^{\lambda}_i$, note that $ \prod_{k = j}^{i - 1} ( 1 - \lambda_k)$ is the share of income $r_j$ transferred from $j$ to $i$. And that $\lambda_i$ refers to the share of it $i$ retains. 
Let ${\cal T}^{\lambda} \subset {\cal G}^{\lambda}$ denote the sub-family of geometric rules such that the set
$\{S^{\lambda}_i: i\in \mathbb{Z}\} $ is bounded from above.
 
\begin{theo} \label{charaResult2}
An allocation rule satisfies feasibility, scale invariance, independence of a future income, consistency and sup-continuity if and only if it belongs to ${\cal T}^{\lambda} \subset {\cal G}^{\lambda}$.
\end{theo}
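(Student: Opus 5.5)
The plan splits into sufficiency and necessity. For sufficiency, the part of the proof of Theorem~\ref{charaResult} establishing feasibility, scale invariance, independence of a future income and consistency for geometric rules never used continuity, so it carries over verbatim. It remains to show that every $\phi^{\lambda}\in{\cal T}^{\lambda}$ satisfies sup-continuity. Formula (\ref{GR}) is linear in $r$ with nonnegative coefficients, so for $r,r'\in L^1_+$ and each $i$,
$$|\phi_i^{\lambda}(r)-\phi_i^{\lambda}(r')|\leq \lambda_i\sum_{j\leq i}\prod_{k=j}^{i-1}(1-\lambda_k)\,|r_j-r'_j|\leq S_i^{\lambda}\,\sup_{j}|r_j-r'_j|.$$
If $S^{\lambda}_i\leq M$ for all $i$, then $\phi^{\lambda}$ is $M$-Lipschitz in the sup-norm, and sup-continuity follows.

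For necessity, I would follow the structure of the proof of Theorem~\ref{charaResult}, replacing each use of taxicab continuity by sup-continuity. First, I would re-prove part 1 of Proposition~\ref{relationaxiom} with sup-continuity in place of continuity. Let $r,r'\in L^1_+$ agree below $j$. Repeated use of independence of a future income gives $\phi_i(r)=\phi_i(r')$ for $i<j$ whenever $r$ and $r'$ differ in finitely many coordinates. Now truncate both streams above $j+m$. Since $r\in L^1$, its tail tends to $0$ in $L^1$, hence also in the sup-norm, so the truncations converge to $r$ and $r'$ in the sup-norm. Sup-continuity then yields independence of future income streams. Part 2 of Proposition~\ref{relationaxiom} uses only feasibility, so zero income for inessential generations follows. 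With $\lambda_i=\phi_i(e^i)$, Steps 1 and 2 of the proof of Theorem~\ref{charaResult} then go through unchanged. Hence $\phi(r)=\phi^{\lambda}(r)$ for every $r$ that vanishes below some generation.

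The key new step, and the main obstacle, is to show that $\phi^{\lambda}\in{\cal T}^{\lambda}$. Suppose instead that $\{S^{\lambda}_i\}$ is unbounded, where $S^{\lambda}_i=+\infty$ is allowed. For each $m$, I would choose $i_m$ and $N_m$ such that the finite partial sum satisfies
$$\lambda_{i_m}\sum_{j=i_m-N_m}^{i_m}\prod_{k=j}^{i_m-1}(1-\lambda_k)\geq m^2 .$$
Let $r^m$ equal $\frac1m$ on $\{i_m-N_m,\dots,i_m\}$ and $0$ elsewhere. Each $r^m$ vanishes below some generation, so Step 2 applies and gives $\phi_{i_m}(r^m)=\phi^{\lambda}_{i_m}(r^m)\geq m$. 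However, $r^m\overset{\infty}\to 0_{\mathbb{Z}}$, and $\phi(0_{\mathbb{Z}})=0_{\mathbb{Z}}$ by scale invariance with $\alpha=0$. This contradicts sup-continuity, so $S^{\lambda}$ is bounded.

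Finally, I would redo Step 3 of the proof of Theorem~\ref{charaResult}. The lower truncations $r^m$ (set to $0$ below $-m$) converge to $r$ in $L^1$, hence in the sup-norm. By Step 2, $\phi(r^m)=\phi^{\lambda}(r^m)$ for each $m$. Both $\phi$ (by assumption) and $\phi^{\lambda}$ (by sufficiency, now that $\phi^{\lambda}\in{\cal T}^{\lambda}$) are sup-continuous, so passing to the limit gives $\phi(r)=\phi^{\lambda}(r)$. The order matters: boundedness of $S^{\lambda}$ has to be established before this step, since otherwise we could not appeal to sup-continuity of $\phi^{\lambda}$.
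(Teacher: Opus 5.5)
Your proof is correct and follows the paper's argument: the same sup-norm Lipschitz bound via $S^{\lambda}_i$, the same adaptation of Proposition~\ref{relationaxiom} and Steps 1--3, and the same block-of-$\frac{1}{m}$ counterexample. The only difference is the order of steps. The paper completes Step 3 first, using that every geometric rule is taxicab-continuous (Proposition~\ref{Lip-continuous}), so $\phi^{\lambda}(r^m)\to\phi^{\lambda}(r)$ in the sup-norm whether or not $S^{\lambda}$ is bounded, and only then obtains $\phi^{\lambda}\in{\cal T}^{\lambda}$ from its Case 2. Your remark that boundedness of $S^{\lambda}$ must come before Step 3 is therefore unnecessary, although your ordering is also valid.
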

\begin{preuve} 
By Theorem \ref{charaResult}, all rules within ${\cal T}^{\lambda}$ satisfy {\it feasibility}, {\it scale invariance}, {\it independence of a future income} and {\it consistency}. It remains to show that they satisfy {\it sup-continuity} and that no other rule within ${\cal G}^{\lambda}\setminus{\cal T}^{\lambda}$ does so.\\ 

\noindent {\bf Case 1}: $\phi^{\lambda} \in  {\cal T}^{\lambda}$. In this case, there exists $b \in \mathbb{R}$ such that for each $i \in \mathbb{Z}$, $S_i^{\lambda} \leq  b$. Now, for each pair $r, r' \in L_+^1$, and each $i \in \mathbb{Z}$,
\begin{eqnarray} \label{inequality}
|\phi_i^{\lambda}(r) - \phi_i^{\lambda} ( r')| &=&| \lambda_i \sum_{j \in \mathbb{Z}:\atop  j \leq i} \prod_{k = j}^{i - 1} ( 1 - \lambda_k) ( r_j - r_j' ) | \nonumber \\
                                                                   &\leq & \lambda_i \sum_{j \in \mathbb{Z}: \atop j \leq i} \prod_{k = j}^{i - 1} ( 1 - \lambda_k) | r_j - r_j'  | \nonumber \\
                                                                   & \leq & \lambda_i \sum_{j \in \mathbb{Z}: \atop j \leq i} \prod_{k = j}^{i - 1} ( 1 - \lambda_k) \sup_{j \in \mathbb{Z}} { |r_j - r_j'|}  \nonumber \\
                                                                   & \leq &  b \sup_{j \in \mathbb{Z}} { |r_j - r_j'|},  \nonumber 
\end{eqnarray}
from where it follows that $\phi^{\lambda}$ satisfies {\it sup-continuity}.\\ 

\noindent {\bf Case 2}: $\phi^{\lambda} \in   {\cal G}^{\lambda} \setminus  {\cal T}^{\lambda}$. In this case, for each $m \in \mathbb{N}$, there exists $i_m \in \mathbb{Z}$ such that $S_{i_m}^{\lambda} \geq m +1$. 
In particular, for each $m \in \mathbb{N}$, there exists $i_m\in \mathbb{Z}$ and $j_m \in \mathbb{Z}$  such that $S^{\lambda}_{i_m, j_m} \geq m$, where $S^{\lambda}_{i_m, j_m}$ is the partial sum of $S^{\lambda}_{i_m}$ over $\{j_m, \ldots, i_m\}$. That is,
$$ S^{\lambda}_{i_m, j_m} =  \lambda_{i_m}  \sum_{j \in \mathbb{Z}: \atop j_m \leq j \leq i_m} \prod_{k = j}^{i_m - 1} ( 1 - \lambda_k).$$
From this, construct the sequence $(r^m)_{m \in \mathbb{N}} \subseteq L_+^1$ as follows:
\begin{equation*}
r_i^m=\left\{ 
\begin{tabular}{cl}
$\frac{1}{m}$ & if $j_m\leq i \leq i_m $ \\ 
& \\
$0$ & otherwise.%
\end{tabular}%
\right.
\end{equation*}
It follows that $(r^m)_{m\geq 0}\overset{\infty}\to 0_{\mathbb{Z}}$. On the other hand,
\begin{eqnarray}
\phi^{\lambda}_{i_m} (r^m) &=&  \lambda_{i_m} \sum_{j \in \mathbb{Z}: \atop j_m \leq j \leq i_m} \prod_{k = j}^{i_m - 1} ( 1 - \lambda_k) \frac{1}{m} \nonumber \\
                   &= & S_{i_m, j_m}^{\lambda} \frac{1}{m}  \nonumber \\
                   &\geq& 1. \nonumber
\end{eqnarray}
Hence, for each $m \in \mathbb{N}$, 
$$ \sup_{i \in \mathbb{Z}}\{ \phi^{\lambda}_{i} (r^m) \}\geq 1,$$
from where it follows that $\phi^{\lambda}$ is not sup-continuous.\\


Conversely, let $\phi$ be an allocation rule satisfying {\it feasibility}, {\it scale invariance}, {\it independence of a future income}, {\it consistency} and {\it sup-continuity}. First,
note that the first item of Proposition \ref{relationaxiom} continues to hold under sup-continuity. Second, we proceed as in the uniqueness part of the proof of Theorem 1. In particular, note that the sequence constructed in Step 3 of that proof converges to $r$ in sup-norm and that $\phi^{\lambda}(r^m)$ converges in taxicab norm (and so in sup-norm too) to $\phi^{\lambda}(r)$. It follows that
that $\phi(r) = \phi^{\lambda}(r)$.\end{preuve}

As $\mathcal{U}^{\lambda} \subset {\cal T}^{\lambda}$, we have the following extra characterization results.

\begin{corollary} An allocation rule satisfies feasibility, scale invariance, independence of a future income, consistency, sup-continuity and translation invariance if and only if it is a uniform geometric rule. Furthermore, it satisfies balance if and only if $\lambda \in (0, 1]$.
\end{corollary}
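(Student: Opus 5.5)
The plan is to reduce everything to Theorem \ref{charaResult2} and to the argument given right after Theorem \ref{charaResult}. I treat the two directions separately and then handle the balance clause with Proposition \ref{propBalance}.

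For necessity, let $\phi$ satisfy feasibility, scale invariance, independence of a future income, consistency, sup-continuity and translation invariance. By Theorem \ref{charaResult2}, $\phi=\phi^{\lambda}$ for some $\phi^{\lambda}\in{\cal T}^{\lambda}$, where $\lambda_i=\phi_i(e^i)$. Fix $i\in\mathbb{Z}$ and apply translation invariance to the pair $r=e^i$, $r'=e^{i+1}$; this pair satisfies $r'_{k+1}=r_k$ for all $k$. We get $\lambda_i=\phi_i(e^i)=\phi_{i+1}(e^{i+1})=\lambda_{i+1}$. By induction in both directions, $\lambda$ is constant, so $\phi$ is a uniform geometric rule.

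For sufficiency, take $\lambda\in[0,1]$ and the uniform rule $\phi^{\lambda}$ given by (\ref{GR-U}).
\begin{itemize}
\item \emph{Membership in ${\cal T}^{\lambda}$.} We have $S^{\lambda}_i=\lambda\sum_{p\geq 0}(1-\lambda)^p$. This equals $1$ if $\lambda\in(0,1]$ and $0$ if $\lambda=0$, so $\{S^{\lambda}_i\}$ is bounded. Theorem \ref{charaResult2} then gives the first five axioms, including sup-continuity.
\item \emph{Translation invariance.} If $r'_{k+1}=r_k$ for all $k$, then $\phi^{\lambda}_{i+1}(r')=\lambda\sum_{j\leq i+1}(1-\lambda)^{i+1-j}r_{j-1}$. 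Substituting $j'=j-1$ turns this into $\lambda\sum_{j'\leq i}(1-\lambda)^{i-j'}r_{j'}=\phi^{\lambda}_i(r)$.
\end{itemize}

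For the balance clause, Proposition \ref{propBalance} says a geometric rule is balanced if and only if it lies in ${\cal B}^{\lambda}$, that is, $\prod_{k\geq i}(1-\lambda_k)=0$ for every $i$. For constant $\lambda$ this product is $\lim_{p\to\infty}(1-\lambda)^p$, which is $0$ exactly when $\lambda>0$ and $1$ when $\lambda=0$. Hence balance holds if and only if $\lambda\in(0,1]$.

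No step is a real obstacle; the only point needing care is that the translation step is applied to the arbitrary rule $\phi$ through $\lambda_i=\phi_i(e^i)$, which is legitimate once Theorem \ref{charaResult2} has identified $\phi$ with $\phi^{\lambda}$.
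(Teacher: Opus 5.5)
Your proposal is correct and follows the paper's route: Theorem~\ref{charaResult2} reduces the rule to a geometric one, the $e^i$/$e^{i+1}$ translation argument forces $\lambda$ to be constant, uniform rules lie in ${\cal T}^{\lambda}$, and Proposition~\ref{propBalance} settles the balance clause. One small refinement: for $\lambda=0$ the quantity $S^{\lambda}_i$ is formally $0\cdot\infty$, so it is cleaner to note directly that $\phi^0\equiv 0_{\mathbb{Z}}$ is trivially sup-continuous than to rely on a convention.
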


\begin{corollary}
An allocation rule satisfies feasibility, scale invariance, independence of a future income, consistency, sup-continuity and idempotency if and only if it is either the no-transfer rule or the full-transfer rule. 
Furthermore, it satisfies balance if and only if it is the no-transfer rule.
\end{corollary}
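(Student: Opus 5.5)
The plan is to reduce the statement to Theorem~\ref{charaResult2} and then use \emph{idempotency} to pin down $\lambda$. Sufficiency is immediate. The no-transfer rule $\phi^{1}$ and the full-transfer rule $\phi^{0}$ are geometric rules with $S^{\lambda}_i=1$ and $S^{\lambda}_i=0$ respectively, so both lie in ${\cal T}^{\lambda}$. Theorem~\ref{charaResult2} then gives \emph{feasibility}, \emph{scale invariance}, \emph{independence of a future income}, \emph{consistency} and \emph{sup-continuity}. \emph{Idempotency} is trivial for both: $\phi^1(\phi^1(r))=r$ and $\phi^0(0_{\mathbb{Z}})=0_{\mathbb{Z}}$.

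For necessity, Theorem~\ref{charaResult2} tells us that $\phi=\phi^{\lambda}$ for some $\phi^{\lambda}\in{\cal T}^{\lambda}$, with $\lambda_i=\phi_i(e^i)$. The first key step is to apply \emph{idempotency} to the unit streams $e^i$. Since $\phi^{\lambda}(e^i)$ vanishes below $i$ and equals $\lambda_i$ at $i$, formula (\ref{GR}) gives
\[
\phi^{\lambda}_i\bigl(\phi^{\lambda}(e^i)\bigr)=\lambda_i\cdot\lambda_i .
\]
Idempotency forces this to equal $\phi^{\lambda}_i(e^i)=\lambda_i$, so $\lambda_i^2=\lambda_i$ and hence $\lambda_i\in\{0,1\}$ for every $i\in\mathbb{Z}$.

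The second step would be to show that $\lambda$ is constant, and this is where I expect the argument to break down. Take any $\lambda\in\{0,1\}^{\mathbb{Z}}$. Then $\phi^{\lambda}(r)$ is supported on $\{i:\lambda_i=1\}$, and any stream supported there is fixed by $\phi^{\lambda}$, because each such generation keeps everything and passes on nothing. So every $\{0,1\}$-valued geometric rule is idempotent. Membership in ${\cal T}^{\lambda}$ only requires that the runs of zeros immediately preceding a $1$ have bounded length, since then $S^{\lambda}_i$ is at most that length plus one. For example, the alternating profile $\lambda_{2k}=0$, $\lambda_{2k+1}=1$ satisfies all six axioms. Yet it is neither $\phi^0$ nor $\phi^1$.

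I would therefore first try to extract constancy from some further hypothesis. The natural candidate is \emph{translation invariance}, which gives $\lambda_i=\lambda_{i+1}$ exactly as in the argument preceding the first corollary, and which matches the paper's remark that only ``the two extreme uniform geometric rules'' are idempotent. Without an extra assumption of this kind, the plan proves only the sharper-but-different conclusion $\lambda\in\{0,1\}^{\mathbb{Z}}$ with bounded zero-runs before each $1$. I would flag the statement as worded for correction rather than claim a proof.

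The balance clause goes through under either reading. $\phi^1$ is balanced. $\phi^0$ is not, since $\phi^0(e^0)=0_{\mathbb{Z}}$. More generally, a $\{0,1\}$-valued rule is balanced if and only if condition (\ref{c-balance}) holds, that is, if and only if there is no infinite terminal run of zeros, by Proposition~\ref{propBalance}.
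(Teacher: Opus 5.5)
Your analysis is correct. What you found is an error in the statement, and in the paper's justification for it, not a gap in your argument.

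\textbf{Where the paper goes wrong.} The paper supports this corollary only by noting ${\cal U}^{\lambda}\subset{\cal T}^{\lambda}$, together with its earlier remark that only the two extreme \emph{uniform} geometric rules are idempotent. That reasoning tacitly assumes $\lambda$ is constant. Constancy would require translation invariance, but here idempotency is imposed ``instead'' of it.

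\textbf{Your argument.} Your two observations settle the matter. Idempotency applied to $e^i$ forces $\lambda_i^2=\lambda_i$. Conversely, every $\{0,1\}$-valued geometric rule is idempotent: $\phi^{\lambda}(r)$ is supported on $\{i:\lambda_i=1\}$, and streams supported there are fixed points. Combined with Theorem~\ref{charaResult2}, the six axioms characterize exactly the profiles $\lambda\in\{0,1\}^{\mathbb{Z}}$ in which the run of zeros immediately preceding each $1$ has uniformly bounded (in particular finite) length.

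\textbf{A counterexample already in the paper.} Take $\lambda_i=1$ for $i\leq 0$ and $\lambda_i=0$ for $i>0$, which the paper itself places in ${\cal T}^{\lambda}\cap{\cal P}^{\lambda}$. It yields $\phi_i(r)=r_i$ for $i\leq 0$ and $\phi_i(r)=0$ for $i>0$, which is obviously idempotent. The same example also refutes the companion idempotency corollaries under taxicab and point-wise continuity. Adding translation invariance repairs the statement, as you propose.

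\textbf{One correction to your last paragraph.} The balance clause does not survive the literal reading. Your alternating profile gives $\phi_{2k+1}(r)=r_{2k}+r_{2k+1}$ and $\phi_{2k}(r)=0$. It satisfies (\ref{c-balance}), since every $i$ has some $k\geq i$ with $\lambda_k=1$, so it is balanced. It also satisfies all six axioms, and it is not the no-transfer rule. So the second sentence of the corollary fails as well. Your own criterion (no terminal run of zeros) already shows this. The clause ``balanced if and only if no-transfer'' holds only once constancy of $\lambda$ is secured, for example by translation invariance.
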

\medskip

We conclude this section considering the continuity notion arising from point-wise convergence (again, with no logical relation with the other continuity axioms we have considered, albeit point-wise convergence implies convergence in the taxicab norm).\\

Let $(r^m)_{m \geq0}$ be a sequence within $L_+^1$. We say that $(r^m)_{m\geq 0}$ converges to $r \in  L_+^1$ point-wise and write $(r^m)_{m\geq 0}\overset{p}\to r$ if for each $\epsilon >0$, and for each $i \in \Z$, there exists $m_0 \in \N$ such that for each $m \geq m_0$, $\vert r^m_i  -r_i \vert  < \epsilon$.\\

\noindent {\bf Point-wise continuity}: If $(r^m)_{m\geq 0}\overset{p}\to r$, then $(\phi(r^m))_{m \geq 0}\overset{p}\to \phi(r)$.\\

%

Let ${\cal P}^{\lambda} \subset {\cal G}^{\lambda}$ denote the sub-family of geometric rules such that for each $i \in \mathbb{Z}$, there exists $j < i$, such that $\lambda_j = 1$. An interpretation of this condition is that transfers take place within disjoint blocks, as when $\lambda_j =1$ no income will be transferred to future generations. 
As the next result states, this set is precisely encompassing the geometric rules that satisfy point-wise continuity. 
  
  \begin{theo} \label{charaResult3}
An allocation rule satisfies feasibility, scale invariance, independence of a future income, consistency and point-wise continuity if and only if it is the full-transfer rule or it belongs to ${\cal P}^{\lambda}$.
\end{theo}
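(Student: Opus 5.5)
The plan follows the structure of the proof of Theorem \ref{charaResult2}. First I reduce to geometric rules using the uniqueness argument of Theorem \ref{charaResult}. Then I determine exactly which geometric rules are point-wise continuous.

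\textbf{Reduction to geometric rules.} Let $\phi$ satisfy the five axioms. I will check that the first item of Proposition \ref{relationaxiom} still holds with point-wise continuity in place of continuity. Any sequence of streams that converges in the taxicab norm, such as the truncations used there, also converges point-wise, and point-wise equality of the images is all that is needed to transfer $\phi_i(r)=\phi_i(r')$ for $i<j$ to the limit. With that, Steps 1 and 2 of the proof of Theorem \ref{charaResult} go through unchanged, so $\phi(r)=\phi^{\lambda}(r)$ for every $r$ that vanishes far in the past, where $\lambda_i=\phi_i(e^i)$. For Step 3, the truncations $r^m$ converge to $r$ in the taxicab norm, hence point-wise. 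By point-wise continuity of $\phi$, $\phi(r^m)\to\phi(r)$ point-wise. By Proposition \ref{Lip-continuous}, $\phi^{\lambda}(r^m)\to\phi^{\lambda}(r)$ in the taxicab norm, hence point-wise. Point-wise limits are unique, so $\phi=\phi^{\lambda}$. It remains to show that a geometric rule is point-wise continuous if and only if it is the full-transfer rule or lies in ${\cal P}^{\lambda}$. The other four axioms hold for all geometric rules by Theorem \ref{charaResult}.

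\textbf{Sufficiency.} The full-transfer rule is constant, so it is trivially point-wise continuous. Now take $\phi^{\lambda}\in{\cal P}^{\lambda}$ and fix $i$. Choose $j<i$ with $\lambda_j=1$. Every product $\prod_{k=\ell}^{i-1}(1-\lambda_k)$ with $\ell\le j$ then vanishes. Hence $\phi^{\lambda}_i(r)=\lambda_i\sum_{\ell=j+1}^{i}\prod_{k=\ell}^{i-1}(1-\lambda_k)r_\ell$ is a finite linear combination of coordinates of $r$. So $\phi_i^{\lambda}(r^m)\to\phi_i^{\lambda}(r)$ whenever $r^m\to r$ point-wise.

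\textbf{Necessity (the main step).} Suppose $\phi^{\lambda}$ is neither the full-transfer rule nor in ${\cal P}^{\lambda}$. The key combinatorial step is to find a generation $i^*$ with $\lambda_{i^*}>0$ and $\lambda_k<1$ for every $k<i^*$. Since ${\cal P}^{\lambda}$ fails, there is $i_0$ with $\lambda_j<1$ for all $j<i_0$. Since $\phi^{\lambda}\neq\phi^0$, there is $i$ with $\lambda_i>0$.
\begin{itemize}
\item If $\lambda_k<1$ for all $k<i$, take $i^*=i$.
\item Otherwise some $k<i$ has $\lambda_k=1$, and necessarily $k\ge i_0$. Take $i^*$ to be the smallest $k\ge i_0$ with $\lambda_k=1$. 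Then $\lambda_{i^*}=1>0$, and every earlier $\lambda$ is $<1$.
\end{itemize}
For each $m$ with $-m<i^*$, set $c_m=\bigl(\lambda_{i^*}\prod_{k=-m}^{i^*-1}(1-\lambda_k)\bigr)^{-1}$, which is finite and positive, and set $r^m=c_m e^{-m}\in L^1_+$. Each coordinate of $r^m$ is eventually $0$, so $r^m\to 0_{\mathbb{Z}}$ point-wise. On the other hand, $\phi^{\lambda}_{i^*}(r^m)=1$ for all such $m$, while $\phi^{\lambda}_{i^*}(0_{\mathbb{Z}})=0$. Hence $\phi^{\lambda}$ is not point-wise continuous, which completes the characterization.

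I expect this necessity step to be the only delicate point, specifically the choice of $i^*$. The point is to ensure that the transfer products reaching $i^*$ from arbitrarily distant past generations are all strictly positive, which is exactly what fails in ${\cal P}^{\lambda}$.
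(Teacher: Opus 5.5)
Your proposal is correct and follows essentially the same route as the paper. You reduce to geometric rules via the uniqueness argument of Theorem~\ref{charaResult}, adapted to point-wise convergence; you use the finite-sum argument when some earlier $\lambda_j=1$; and you take a single-spike sequence $c_m e^{-m}$ receding into the past to show necessity. Your explicit construction of $i^*$, which the paper only asserts exists, and your normalization giving $\phi^{\lambda}_{i^*}(r^m)=1$ are small refinements of the paper's argument rather than a different approach.
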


\begin{preuve}
By Theorem \ref{charaResult}, the full-transfer rule and all rules within ${\cal P}^{\lambda}$ satisfy {\it feasibility}, {\it scale invariance}, {\it independence of a future income} and {\it consistency}. It remains to show that they satisfy {\it point-wise continuity} and that no other geometric rule does so. 
\\

\noindent {\bf Case 1}:
 $\phi^{\lambda} \in {\cal P}^{\lambda}  \cup \{\phi^{0}\}$. 
If  $\phi^{\lambda} = \phi^{0}$, then it is trivially  point-wise continuous. Otherwise, by definition of ${\cal P}^{\lambda} $, for each $i \in \mathbb{Z}$,  there exists $j \in \mathbb{Z}$ such that   $ j< i $ and $\lambda_{j} = 1$.
Let $(r^m)_{m > 0} \overset{p}\to r$. 
Then, for each $m \in \mathbb{N}$, we have that for each $i \in \mathbb{Z}$, 
$$\phi^{\lambda}_i(r^m) - \phi^{\lambda}_i (r)  = 
        \lambda_i \sum_{j'  \in \mathbb{Z}:\atop   j< j' \leq i} \prod_{k = j'}^{j - 1} ( 1 - \lambda_k) ( r^m_{j'} - r_{j'}).$$ 
 Note that the above expression is a finite sum. Furthermore, for each $i \in \mathbb{Z}$ and each $j'\in \mathbb{Z}$ such that  $ j< j' \leq i$, $(r^m_{j'})_{m \in \mathbb{N}}\overset{p}\to r_{j'}$, which ensures that  $\phi^{\lambda}_i(r^m) - \phi^{\lambda}_i (r)\overset{p}\to 0$.  
Thus, $\phi^{\lambda}$ satisfies point-wise continuity. \\
 
\noindent {\bf Case 2}: $\phi^{\lambda} \in \mathcal{G}^{\lambda} \setminus ({\cal P}^{\lambda}  \cup \{\phi^{0}\})$. In this case, there exists   $i \in \mathbb{Z}$ such that $\lambda_i > 0$ and $\lambda_j < 1$ for each $j < i$. 
Consider the sequence $(r^m)_{m > 0}$ in $L^1_+$ such that:
$$r^m_{i - m} = \frac{1}{ \prod_{k = i - m}^{i -1} (1- \lambda_k)},$$
and $r^m_{i} = 0$ for each $i \in \mathbb{Z} \setminus \{i - m\}$. Then, $(r^m)_{m > 0}\overset{p}\to 0_{\mathbb{Z}}$.\footnote{Note that $(r^m)_{m > 0}\not\overset{\infty}\to 0_{\mathbb{Z}}$ as $r^m_{j - m}$ increases with $m$.} For each $m > 0$,
$$ \phi_i^{\lambda} (r^m) =  \lambda_i,$$
which implies that $\phi^{\lambda} (r^m)\not\overset{p}\to 0_{\mathbb{Z}}$, and so $\phi^{\lambda}$ violates point-wise continuity.\\

The converse implication follows from similar arguments to those in the uniqueness part of the proof of Theorem \ref{charaResult2}.
\end{preuve}

 Note that the only uniform geometric rules satisfying point-wise continuity are the full-transfer rule ($\lambda = 0$) and the no-transfer rule ($\lambda = 1$). This leads to the following characterization result.
 
 \begin{corollary}
An allocation rule satisfies feasibility, scale invariance, independence of a future income, consistency, point-wise continuity and translation invariance or idempotency if and only if it is either the no-transfer rule or the full-transfer rule. 
Furthermore, it satisfies balance if and only if it is the no-transfer rule.
\end{corollary}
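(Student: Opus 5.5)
The plan is to reduce everything to earlier results, using Theorem \ref{charaResult3} as the entry point. For the ``if'' direction, note first that the no-transfer rule $\phi^{1}$ belongs to ${\cal P}^{\lambda}$, since every $\lambda_j=1$. The full-transfer rule $\phi^{0}$ is explicitly allowed in Theorem \ref{charaResult3}. So both rules satisfy feasibility, scale invariance, independence of a future income, consistency and point-wise continuity. Both are trivially translation invariant, because $\phi^1(r)=r$ and $\phi^0(r)=0_{\mathbb{Z}}$ commute with shifts. Both are trivially idempotent, because $\phi^1\circ\phi^1=\phi^1$ and $\phi^0(0_{\mathbb{Z}})=0_{\mathbb{Z}}$.

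For the ``only if'' direction, let $\phi$ satisfy the five axioms of Theorem \ref{charaResult3} together with translation invariance or idempotency. By Theorem \ref{charaResult3}, $\phi=\phi^{\lambda}$ is either $\phi^0$ or a member of ${\cal P}^{\lambda}$; in either case it is a geometric rule. By Proposition \ref{Lip-continuous}, $\phi$ is then also continuous in the taxicab sense. Hence $\phi$ satisfies feasibility, scale invariance, independence of a future income, consistency and continuity, which are exactly the hypotheses of Theorem \ref{charaResult}. We then treat the two cases separately.

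\emph{Translation invariance.} Here the first corollary to Theorem \ref{charaResult} (equivalently, the computation $\lambda_i=\phi_i(e^i)=\phi_{i+1}(e^{i+1})=\lambda_{i+1}$) shows that $\lambda$ is constant. If the constant is $0$, then $\phi=\phi^0$. If it is positive, then $\phi\neq\phi^0$, so $\phi\in{\cal P}^{\lambda}$. Membership in ${\cal P}^{\lambda}$ forces some $\lambda_j=1$, and since $\lambda$ is constant, $\lambda\equiv1$, so $\phi=\phi^1$. This reproduces the remark preceding the statement that the only uniform rules with point-wise continuity are $\lambda=0$ and $\lambda=1$.

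\emph{Idempotency.} Here I invoke directly the second corollary to Theorem \ref{charaResult}, whose hypotheses we have just verified. It yields that $\phi$ is the no-transfer or the full-transfer rule. This is the step that carries all the content, and I expect it to be the main obstacle. The whole argument rests on that earlier corollary, so the point to check carefully is that Proposition \ref{Lip-continuous} legitimately upgrades point-wise continuity to taxicab continuity. This holds because we already know from Theorem \ref{charaResult3} that $\phi$ is geometric, and every geometric rule is continuous by Proposition \ref{Lip-continuous}. Once that is checked, the corollary applies verbatim.

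Finally, for balance: $\phi^1(r)=r$ is clearly balanced. By contrast, $\phi^0(e^0)=0_{\mathbb{Z}}$ has total $0<1$, so $\phi^0$ violates balance. Equivalently, in the notation of Proposition \ref{propBalance}, $\phi^1\in{\cal B}^{\lambda}$ while $\phi^0\notin{\cal B}^{\lambda}$, since $\prod_{k\ge i}(1-0)=1$.
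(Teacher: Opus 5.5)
Your translation-invariance branch is correct and matches the paper's one-line justification. Theorem~3 makes $\phi$ geometric, translation invariance gives $\lambda_i=\phi_i(e^i)=\phi_{i+1}(e^{i+1})=\lambda_{i+1}$, and a constant $\lambda$ lies in ${\cal P}^{\lambda}\cup\{\phi^0\}$ only when it equals $0$ or $1$. The balance clause then follows as you say.

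The gap is the idempotency branch. Upgrading point-wise continuity to taxicab continuity via Proposition~2 is harmless. The problem is that the idempotency corollary to Theorem~1, which you apply ``verbatim'', is false, and so is this branch of the statement. For a geometric rule, $\phi^{\lambda}_j(e^j)=\lambda_j$ and $\phi^{\lambda}_l(e^j)=0$ for $l<j$, hence $\phi^{\lambda}_j(\phi^{\lambda}(e^j))=\lambda_j^2$. So idempotency only gives $\lambda_j\in\{0,1\}$ generation by generation, and nothing ties different generations together. Conversely, every $\{0,1\}$-valued $\lambda$ gives an idempotent rule. If $\lambda_i=0$ then $\phi^{\lambda}_i(r)=0$. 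If $\lambda_i=1$ then $\phi^{\lambda}_i(r)$ is the total income of the block of generations strictly after the previous index with $\lambda=1$ up to $i$ (all generations up to $i$ if there is none). Every member of that block other than $i$ receives nothing, so applying the rule again changes nothing.

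For a concrete counterexample, take $\lambda_{2i}=1$ and $\lambda_{2i+1}=0$. Then
\[
\phi^{\lambda}_{2i}(r)=r_{2i-1}+r_{2i},\qquad \phi^{\lambda}_{2i+1}(r)=0,
\]
so $\phi^{\lambda}\circ\phi^{\lambda}=\phi^{\lambda}$. This rule lies in ${\cal P}^{\lambda}$, so by Theorem~3 it satisfies feasibility, scale invariance, independence of a future income, consistency and point-wise continuity. It is even balanced, since every tail product $\prod_{k\geq i}(1-\lambda_k)$ vanishes. Yet it is neither $\phi^0$ nor $\phi^1$, so both sentences of the statement fail.

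The paper's own example with $\lambda_i=1$ for $i\leq 0$ and $\lambda_i=0$ for $i>0$ is another idempotent member of ${\cal P}^{\lambda}$: it simply erases $r_i$ for $i>0$. The paper's remark that the only uniform point-wise continuous geometric rules are $\phi^0$ and $\phi^1$ covers translation invariance only, so its argument has the same hole.

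What the axioms of Theorem~3 together with idempotency actually characterize is the family of $\phi^{\lambda}$ with $\lambda\in\{0,1\}^{\mathbb{Z}}$ and $\{j:\lambda_j=1\}$ either empty or unbounded below. Balance additionally requires this set to be unbounded above, which still leaves far more rules than $\phi^1$.
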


To conclude, we order the sub-families of geometric rules introduced in this article. We know that
$$ {\cal G}^{\lambda} \supset  {\cal T}^{\lambda} \supset {\cal E}^{\lambda} \supset {\cal U}^{\lambda} \setminus \{\phi^0\} \,\,  \mbox{ and }  \,\, {\cal B}^{\lambda} \supset  {\cal E}^{\lambda} $$

\noindent $\bullet$ Let   $\phi^{\lambda} \in {\cal G}^{\lambda} $ where $\lambda: \mathbb{Z} \longrightarrow [0, 1]$ is defined as  

\begin{equation*}
\lambda_i=\left\{ 
\begin{tabular}{cl}
$1$ & if $i < 0$ \\ 
 $ 1$ & if $i = n(n +1)$ \mbox{ for }  $n \geq0$ \\
$0$ & otherwise.%
\end{tabular}%
\right.
\end{equation*}
By definition, $\phi^{\lambda} \in  {\cal P}^{\lambda} $. 
 On the other  hand, 
 $$S^\lambda_{n(n+1)} = n(n+1) -  n(n-1) = 2n,$$
 which shows that $\phi^{\lambda} \notin  {\cal T}^{\lambda}$. Note that every geometric rule  
 ${\cal U}^{\lambda} \setminus \{\phi^0\}$ belongs to ${\cal T}^{\lambda}$ but not to   ${\cal P}^{\lambda}$.
Finally, let  the geometric rules $\phi^{\lambda} \in {\cal G}^{\lambda} $ such that 
for each $i \in \Z $, $\lambda_{2i} = 1$ and $\lambda_{2i+1} = 1/2$.
It turns out $\phi^\lambda \in  {\cal P}^{\lambda} \cap {\cal T}^{\lambda}$. 

\noindent $\bullet$ Consider the geometric rule  $\phi^{\lambda} \in {\cal G}^{\lambda} $ where $\lambda: \mathbb{Z} \longrightarrow [0, 1]$ is defined as

\begin{equation*}
\lambda_i=\left\{ 
\begin{tabular}{cl}
$0$ & if $i > 0$ \\ 
 $ 1$ & if $i \leq 0$ \\
\end{tabular}%
\right.
\end{equation*}
Obviously, $\phi^{\lambda} \in {\cal T}^{\lambda} \cap  {\cal P}^{\lambda}$ but $\phi^{\lambda} \not \in {\cal B}^{\lambda}$. Next  take $\phi^{\lambda} \in {\cal G}^{\lambda} $ where $\lambda: \mathbb{Z} \longrightarrow [0, 1]$ is defined as
\begin{equation*}
\lambda_i=\left\{ 
\begin{tabular}{cl}
$1/2$ & if $i > 0$ \\ 
 $ 0$ & if $i \leq 0$ \\
\end{tabular}%
\right.
\end{equation*}
It is clear that $\phi^{\lambda} \in {\cal B}^{\lambda}$ but  $\phi^{\lambda} \not \in ({\cal T}^{\lambda} \cup  {\cal P}^{\lambda})$.

From the above, the following Venn diagram summarizes the relationships between the families of geometric rules, where
${\cal U}^{\star, \lambda}$ stands for ${\cal U}^{\lambda} \setminus \{ \phi^{0}\}$.
\bigskip

\begin{center}
 \begin{tikzpicture}[line cap=round,line join=round]

\coordinate (cG) at (0,0);      \def\rG{4.0}
\coordinate (cT) at (-0.1,0.8); \def\rT{2.6}
\coordinate (cB) at (1,0); \def\rB{2.1}
\coordinate (cE) at (-0.3,-0.0);  \def\rE{1.5}
\coordinate (cU) at (-0.3,-0.0);  \def\rE{1.5}

\coordinate (cP) at (-0.8,-1);\def\rP{2.3}

{\color{green}\draw[thick] (0.75, 0.85) ellipse (1.55 and 0.9);}
{\color{blue} \draw[thick] (0.3, 1.05) ellipse (1.05 and 0.5);}
{\color{magenta} \draw[thick]     (cG) circle[radius=\rG];}
{\color{yellow}\draw[thick]     (cT) circle[radius=\rT];}
{\color{red} \draw[thick]     (cB) circle[radius=\rB];}
{\color{orange} \draw[thick] (cP) circle[radius=\rP];}

\node at ($(cG)+(2.9,2.4)$) {{\color{magenta} ${\cal G}^{\lambda}$}};
\node at ($(cT)+(1.3,1.9)$) {{\color{yellow}${\cal T}^{\lambda}$}};
\node at ($(cB)+(1.4, -1)$) {{\color{red}${\cal B}^{\lambda}$}};
\node at ($(cE)+(2.05, 0.6)$) {{\color{green}${\cal E}^{\lambda}$}};
\node at ($(cP)+(-0.8,-1.5)$) {{\color{orange}${\cal P}^{\lambda}$}};
\node at ($(cU)+(1.15, 1.1)$) {{\color{blue}${\cal U}^{\star, \lambda}$}};

\node at (-0.8, 1.5) {\scriptsize $\phi^{0}$};
\node at (0.35, 0.75) {\scriptsize $\phi^{1}$};

\end{tikzpicture}
\end{center}

%
%
%
%
%
%

 \section{Discussion}
 In this paper, we have presented a stylized model for intergenerational transfers of income. We have characterized a family of rules that implement concatenated (geometric) transfers of income from past and present generations to future generations. Essentially, with geometric rules, each generation retains a share of the aggregate income received from past generations and transfers the remaining share to the next generation. We have shown that this family is characterized by the combination of feasibility, scale invariance, independence of a future income, consistency and continuity. For alternative continuity notions (replacing the standard taxicab norm by either the sup-norm or point-wise convergence), we characterize sub-families of geometric rules. 

As mentioned in the introduction, our work is somewhat related to the sizable literature on intergenerational equity, where the aim is to rank infinite utility streams (whereas in our case the aim is to provide rules that assign to each infinite income stream another infinite income stream, hence conveying intergenerational transfers of income). The main result in that literature (which also arises in the Arrovian framework with infinite population), is a tradeoff between the principles of unanimity (in the form of a weak Pareto axiom) and impartiality (in the form of an anonymity axiom). A typical way out from this tradeoff is to weaken the second principle. For example, by imposing finite anonymity rather than (full) anonymity. But this comes at the cost of the non-constructibility of the rules (e.g., Lauwers, 2010). 
In our analysis, no such tradeoff exists. We have shown that a combination of natural axioms characterizes a meaningful allocation rule. The only caveat is that the characterization depends on the notion of continuity. This is in line with \textit{the cost of continuity} in infinite population-ethics (see, for instance, Section 4.2 in Pivato and Fleurbaey (2024) and the references cited therein). 

A somewhat related literature is concerned with intertemporal choice (e.g., Chambers and Echenique, 2018; Feng and Ke, 2018; Billot and Qu, 2025). In that case, infinite streams are typically evaluated via discounted utilitarianism and the emphasis is on the appropriate choice of the discount factor.  
Chambers et al. (2023) characterize discount factors that display decreasing impatience through a convexity axiom for investments at fixed interest rates. They also show that they are precisely equivalent to a \textit{geometric} average of generalized quasi-hyperbolic discount rates. 

We should emphasize that a differential feature of our analysis is to work with a model in which both the past and the future are unbounded. That is, to represent the infinite (but countable) set of generations we consider $\mathbb{Z}$ rather than $\mathbb{N}$ (which is the standard in the literatures on intergenerational equity or intertemporal choice mentioned above). This happens to play an important role in our analysis when dealing with balanced rules and continuity.
More precisely, the balance condition $\prod_{k = i}^{\infty} (1 - \lambda_k)  = 0$ that defines the sub-family $\mathcal{B}^{\lambda}$, is a forward condition, as we start from $i$ and move towards the future. On the other hand, the conditions that define the sub-families $\mathcal{P}^{\lambda}$ and $\mathcal{T}^{\lambda}$ are backward conditions. In the former case, we say for each $i\in \mathbb{Z}$, there exists $j < i$ with $\lambda_j = 1$. In the latter case, we aggregate over all $j \leq i$. This partly explains why we can combine conditions to come up with intersections of different sub-families of geometric rules (as illustrated with the Venn diagram above).

To conclude, we also stress that our results can also be seen as extensions of counterpart results within the literature on fair allocation among finite populations (e.g., Moulin, 2014; Thomson, 2023). Geometric rules have been highlighted in various forms for related models within that field, which were listed at the introduction. Most of the principles we formalize in the axioms used in this paper also apply in those models. The distinctive feature comes from the role that continuity (in its various forms) plays in the model with infinite streams we study here.  

 \section{Appendix}

 \subsection{Proofs of the propositions}
 
In order to prove Proposition \ref{propBalance}, we need the following two intermediate results.

\begin{lemma} \label{sequence} Let $(x_i)_{i \geq 1}$ be a sequence of real numbers. Then, it holds that:
\begin{equation} \label{eqL1}
\forall k\geq 1, \quad \prod_{i = 1}^k (1 - x_i) + \sum_{i = 1}^{k} x_i  \prod_{j = 1}^{i - 1} (1 - x_j) = 1. 
\end{equation} 
\end{lemma}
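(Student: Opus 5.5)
We prove identity (\ref{eqL1}) by induction on $k$. It is a telescoping identity that needs no assumptions on the $x_i$: it is purely algebraic and holds for arbitrary real numbers, not only for numbers in $[0,1]$. Throughout we use the convention that the empty product $\prod_{j=1}^{0}(1-x_j)$ equals $1$.

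For the base case $k=1$, the left-hand side is $(1-x_1) + x_1\cdot 1 = 1$, as required.

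For the inductive step, assume (\ref{eqL1}) holds for some $k\geq 1$, and write $P_k = \prod_{i=1}^{k}(1-x_i)$. Passing from $k$ to $k+1$ has two effects on the left-hand side:
\begin{itemize}
\item the product $P_k$ is replaced by $P_{k+1} = P_k(1-x_{k+1})$;
\item the sum gains the single new term $x_{k+1}\prod_{j=1}^{k}(1-x_j) = x_{k+1}P_k$.
\end{itemize}
The new product and the new term together give $P_k(1-x_{k+1}) + x_{k+1}P_k = P_k$. Hence the left-hand side at $k+1$ equals the left-hand side at $k$, which is $1$ by the induction hypothesis.

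Equivalently, the sum telescopes: each summand satisfies $x_i\prod_{j<i}(1-x_j) = \prod_{j<i}(1-x_j) - \prod_{j\leq i}(1-x_j)$. Summing over $i=1,\dots,k$ gives $\sum_{i=1}^{k} x_i\prod_{j<i}(1-x_j) = 1 - P_k$, which is exactly (\ref{eqL1}). This version makes transparent how the identity will feed into Lemma \ref{Sum} and Proposition \ref{propBalance}. Setting $x_i=\lambda_{\ell+i-1}$, $1-P_k$ is the share of $r_\ell$ consumed by generations $\ell,\dots,\ell+k-1$, and $P_k$ is the share still being passed on.

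There is no genuine obstacle. The only point requiring care is the empty-product convention in the $i=1$ summand, which we fix explicitly at the outset.
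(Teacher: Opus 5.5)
Your proof is correct and follows the paper's own argument essentially verbatim: induction on $k$, with base case $(1-x_1)+x_1=1$ and inductive step $P_k(1-x_{k+1})+x_{k+1}P_k=P_k$. The telescoping version you add restates the same step, and your explicit empty-product convention makes precise something the paper leaves implicit.
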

\begin{preuve} The proof is by induction on $k \geq 1$.\medskip

{\sc Base Case}: $k = 1$. In this case, (\ref{eqL1}) becomes $(1 - x_1) + x_1 = 1$, which obviously holds. \medskip

{\sc Induction hypothesis}: Assume that  (\ref{eqL1}) holds for $k \geq 1$. \medskip

{\sc Induction step}: Consider the rank $k+1$. One obtains
\begin{eqnarray}
\prod_{i = 1}^{k+1} (1 - x_i) + \sum_{i = 1}^{k+1} x_i  \prod_{j = 1}^{i - 1} (1 - x_j) & = & (1 - x_{k+1})\prod_{i = 1}^k (1 - x_i) + x_{k +1} \prod_{j = 1}^{k} (1 - x_j) + \sum_{i = 1}^{k} x_i  \prod_{j = 1}^{i - 1} (1 - x_j) \nonumber \\
& = & \prod_{i = 1}^k (1 - x_i) + \sum_{i = 1}^{k} x_i  \prod_{j = 1}^{i - 1} (1 - x_j) \nonumber \\
& = & 1, \nonumber
\end{eqnarray}
where the last equality follows from the induction hypothesis.  
\end{preuve}
Note that, as (\ref{eqL1}) holds for each $k \geq 1$, it also follows that 
\begin{equation} \label{eqL2}
\prod_{i = 1}^{+\infty} (1 - x_i) + \sum_{i = 1}^{+\infty} x_i  \prod_{j = 1}^{i - 1} (1 - x_j) = 1. 
\end{equation}

\begin{lemma} \label{Sum} For each 
  $\phi^{\lambda} \in {\cal G}^{\lambda}$, the following two statements hold:
\begin{enumerate}
\item $$\forall j \in \mathbb{Z}, \quad \sum_{i \in \mathbb{Z}:\atop i < j}\phi_i^{\lambda} (r) = \sum_{i \in \mathbb{Z}: \atop i < j} \bigl(1 - \prod_{k = i}^{j- 1} (1 - \lambda_k)\bigr)r_i;$$
\item  $$\sum_{i \in \mathbb{Z}}\phi_i^{\lambda} (r) = \sum_{i \in \mathbb{Z}} \bigl(1 - \prod_{k = i}^{+ \infty} (1 - \lambda_k)\bigr)r_i.$$
\end{enumerate} 
\end{lemma}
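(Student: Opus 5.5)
We prove Lemma~\ref{Sum} by exchanging the order of summation in the double series defining $\sum_{i<j}\phi_i^{\lambda}(r)$, and then using Lemma~\ref{sequence} to telescope the coefficient attached to each $r_\ell$. Since all terms are nonnegative ($r\in L^1_+$ and $\lambda_k\in[0,1]$), Tonelli's theorem for series lets us rearrange freely. Fix $j\in\mathbb{Z}$. By (\ref{GR}),
\begin{equation*}
\sum_{i<j}\phi_i^{\lambda}(r)=\sum_{i<j}\lambda_i\sum_{\ell\le i}\prod_{k=\ell}^{i-1}(1-\lambda_k)\,r_\ell=\sum_{\ell<j}r_\ell\sum_{i=\ell}^{j-1}\lambda_i\prod_{k=\ell}^{i-1}(1-\lambda_k).
\end{equation*}
The inner sum is finite. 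Apply Lemma~\ref{sequence} with $x_1=\lambda_\ell,\,x_2=\lambda_{\ell+1},\dots$ and $k=j-\ell$. The inner sum then equals $1-\prod_{k=\ell}^{j-1}(1-\lambda_k)$, which gives the first statement. The empty-product convention (value $1$ when $i=\ell$) matches $\phi^{\lambda}_\ell$ retaining $\lambda_\ell r_\ell$ from its own income.

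For the second statement we let $j\to+\infty$. The left side increases to $\sum_{i\in\mathbb{Z}}\phi_i^{\lambda}(r)$ by monotone convergence, since all terms are nonnegative. On the right side, the coefficient $c_\ell(j)=\big(1-\prod_{k=\ell}^{j-1}(1-\lambda_k)\big)\mathbf 1_{\ell<j}$ is nondecreasing in $j$: the partial products are nonincreasing because each factor lies in $[0,1]$. Its limit is $1-\prod_{k=\ell}^{+\infty}(1-\lambda_k)$, and this infinite product exists as the limit of a monotone bounded sequence, in the sense used in (\ref{eqL2}). Since $0\le c_\ell(j)r_\ell\le r_\ell$ and $r\in L^1$, monotone (or dominated) convergence for series lets us pass the limit inside the sum over $\ell$, which yields the second identity.

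The only delicate point is justifying the interchange of the double sum and of the limit in $j$. Nonnegativity and summability of $r$ settle both, so I expect no real obstacle; the rest is the bookkeeping of indices in applying Lemma~\ref{sequence}.
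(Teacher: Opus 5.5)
Your proof is correct and follows the paper's argument: part 1 is exactly the paper's exchange of summation followed by (\ref{eqL1}) of Lemma~\ref{sequence}. For part 2 you let $j\to+\infty$ in part 1 and use monotone convergence over $\ell$, whereas the paper exchanges the full double sum over $\mathbb{Z}$ and applies (\ref{eqL2}) to the inner series. Since (\ref{eqL2}) is itself the limit of (\ref{eqL1}), this is the same argument with the limit taken outside the sum over $\ell$ rather than inside it, and your version states the justification for the interchange explicitly.
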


\begin{preuve} For the first statement, let $j \in \mathbb{Z}$. By definition of a geometric rule,
\begin{eqnarray}
\sum_{i \in \mathbb{Z}:\atop i < j} \phi_i^{\lambda} (r) &=&  \sum_{i \in \mathbb{Z}:\atop i < j} \sum_{\ell \in \mathbb{Z}:\atop \ell \leq i}\lambda_i\prod_{k =\ell }^{i - 1} ( 1 - \lambda_k)r_{\ell} \nonumber  \\
                                                                                    &= & \sum_{\ell \in \mathbb{Z}:\atop \ell < j} r_{\ell}  \sum_{i \in \mathbb{Z}:\atop \ell \leq i < j}\lambda_i \prod_{k =\ell }^{i - 1} ( 1 - \lambda_k)\nonumber
\end{eqnarray} 
By (\ref{eqL1})  in Lemma \ref{sequence},
$$ \sum_{i = \ell}^{j -1}\lambda_i \prod_{k =\ell }^{i - 1} ( 1 - \lambda_k) = 1 - \prod_{i = \ell}^{j - 1} ( 1 - \lambda_i),$$ 
from which one deduces that
\begin{eqnarray}
\sum_{i \in \mathbb{Z}:\atop i < j} \phi_i^{\lambda} (r) &=&  \sum_{\ell \in \mathbb{Z}:\atop \ell < j} \bigl(1 - \prod_{i = \ell}^{j - 1} ( 1 - \lambda_i)\bigr)r_{\ell} \nonumber \\
                                                                                    &=&\sum_{i \in \mathbb{Z}:\atop i < j}\bigl(1 - \prod_{k = i}^{j - 1} ( 1 - \lambda_k)\bigr)r_{i}, \nonumber
 \end{eqnarray}
as desired.  \medskip 

As for the second statement, one just needs to proceed in a similar way by using (\ref{eqL2}).                                                                        
\end{preuve}

\begin{preuve} (of Proposition \ref{propBalance}). Let $r \in L_+^1$. By the second statement of Lemma \ref{Sum},
$$\sum_{i \in \mathbb{Z}}\phi_i^{\lambda} (r) = \sum_{i \in \mathbb{Z}} \bigl(1 - \prod_{k = i}^{+ \infty} (1 - \lambda_k)\bigr)r_i \leq \sum_{i \in \mathbb{Z}} r_i,$$
where the inequality follows from the fact that $1 - \lambda_k \in [0, 1]$ for each $k \in \mathbb{Z}$. This proves that $\phi^{\lambda}(r)$ is a feasible allocation. It also follows from here that the allocation is balanced if and only if (\ref{c-balance}) holds.
\end{preuve}

\begin{preuve} (of Proposition \ref{Lip-continuous}). Let $\phi^{\lambda} \in {\cal G}^{\lambda}$ 
  and $r, r' \in L^1_+$. We show that 
$$\sum_{i \in\mathbb{Z}} |\phi_i^{\lambda}(r) -  \phi_i^{\lambda}(r')| \leq \sum_{i \in\mathbb{Z}} | r_i - r_i'|,$$
from where continuity follows. 
By definition of a geometric rule,
\begin{eqnarray} \label{TI-Lip}
\sum_{i \in\mathbb{Z}} |\phi_i^{\lambda}(r) -  \phi_i^{\lambda}(r')| & = & \sum_{i \in\mathbb{Z}} |\lambda_i \sum_{j \in \mathbb{Z}: \atop j \leq i} \prod_{k = j}^{i - 1} (1 - \lambda_k)r_j - \lambda_i \sum_{j \in \mathbb{Z}: \atop j \leq i} \prod_{k = j}^{i - 1} (1 - \lambda_k)r'_j | \nonumber \\
&= &\sum_{i \in\mathbb{Z}} \lambda_i |\sum_{j \in \mathbb{Z}: \atop j \leq i} \prod_{k = j}^{i - 1} (1 - \lambda_k) (r_j - r_j')| \nonumber \\
&\leq & \sum_{i \in\mathbb{Z}} \lambda_i \sum_{j \in \mathbb{Z}: \atop j \leq i} \prod_{k = j}^{i - 1} (1 - \lambda_k) |r_j - r_j'|, 
\end{eqnarray}
where the last inequality follows from the triangular inequality. \medskip

Now, for each $i \in \mathbb{Z}$, let $z_i = |r_i - r_i'|$. Then, $z \in L^1_+$. As $\phi^{\lambda}$ satisfies \textit{feasibility}, it follows by (\ref{TI-Lip}) that
\begin{eqnarray}
\sum_{i \in\mathbb{Z}} |\phi_i^{\lambda}(r) -  \phi_i^{\lambda}(r')| &\leq&  \sum_{i \in\mathbb{Z}} \lambda_i \sum_{j \in \mathbb{Z}: \atop j \leq i} \prod_{k = j}^{i - 1} (1 - \lambda_k) |r_j - r_j'| \nonumber \\
&=& \sum_{i \in\mathbb{Z}} \phi_i^{\lambda}(z) \nonumber \\
&\leq & \sum_{i \in\mathbb{Z}} z_i \nonumber \\
&=& \sum_{i \in\mathbb{Z}} | r_i - r_i'|, \nonumber
\end{eqnarray}
which proves the result.
\end{preuve}

\begin{preuve} (of Proposition \ref{relationaxiom}). For the first statement, let $\phi$ be an allocation rule satisfying \textit{independence of a future income} and \textit{continuity}. Let $r, r' \in L_+^1$ and $j \in \mathbb{Z}$ be such that $r_k =r'_k$ for each $k < j$. 
For each $m\in \mathbb{N}$, we construct the sequence $(r^m)_{m \geq 0}$ in $L^1_+$ as follows:
\begin{equation*}
r_k^m=\left\{ 
\begin{tabular}{cc}
$r'_k$ & if $k < m+j $ \\ 
$r_k$ & otherwise.%
\end{tabular}%
\right.
\end{equation*}
By \textit{continuity}, $\phi(r^m) \longrightarrow  \phi (r')$. 
By successive applications of \textit{independence of a future income}, for each $i < j$, 
$$\quad \phi_i(r) = \phi_i(r^0) = \ldots = \phi_i(r^m),$$
which forces $\phi_i(r)=  \phi_i (r')$. This shows that $\phi$ satisfies \textit{independence of future income streams}.\medskip

As for the second statement, let $\phi$ be an allocation rule satisfying \textit{independence of future income streams} and \textit{feasibility}. Let $j \in \mathbb{Z}$ and $r \in L_+^1$ be such that $r_i = 0$ for each $i \leq j$. By \textit{independence of future income streams}, 
$\phi_j(r) = \phi_j(0_{\mathbb{Z}})$.
By \textit{feasibility},
$$\sum_{i \in \mathbb{Z}} \phi_i(0_{\mathbb{Z}}) \leq 0,$$
which forces $\phi_i(0_{\mathbb{Z}}) = 0$ for any $i \in \mathbb{Z}$, and so $\phi_j(r) = 0$, as desired.
\end{preuve}
 \subsection{Logical independence of the axioms in Theorem \ref{charaResult}}
 
 \begin{itemize}
\item Consider the allocation rule $\phi$ on $L_+^1$ defined as $\phi_{i} (r) = 2r_i$ for each $i \in \mathbb{Z}$. It violates feasibility, 
 but satisfies the remaining axioms in Theorem \ref{charaResult}.
 \item Define the allocation $\phi$ on $L_+^1$ as follows:
 \begin{itemize}
 \item $\phi_0(r) = r_0$ if $r_0 \leq r_1$, and $\phi_0(r) = r_1$ otherwise;
 \item $\phi_1(r) = r_1$  if $r_0 \leq r_1$,  and $\phi_1(r) = r_0$ otherwise;
 \item $\phi_1(r) = r_i$ for each other $i \in \mathbb{Z} \setminus \{0, 1\}$.
 \end{itemize}
  It violates independence of a future income, but satisfies all the other axioms of Theorem  \ref{charaResult}.
 \item The allocation rule $\phi$ on $L_+^1$ defined as:
 \begin{itemize}
 \item $\phi_0 (r) =  \frac{r_0^2}{ 1 + r_0}$;
 \item $\phi_1 (r) = r_1 + r_0 -  \frac{r_0^2}{ 1 + r_0}$;
  \item $\phi_i (r) = r_i$ for each other $i \in \mathbb{Z} \setminus \{0, 1\}$,
   \end{itemize}
  violates scale invariance, 
  but satisfies all the other axioms of Theorem  \ref{charaResult}.
  \item Consider the allocation $\phi$ on $L_+^1$  defined as:
    \begin{itemize}
    \item $\phi(r) = \phi^{\lambda}(r)$ for some function $\lambda = (1/2, \ldots,1/2, \ldots)$ if $r$ has a finite support, that is if the set $\{r_i \not = 0 : i \in \mathbb{Z}\}$ is finite, and
    \item $\phi(r) = r$ otherwise.
     \end{itemize} 
     This allocation rule is not continuous. To see this, consider the sequence $(r^m)_{m \geq 0} \subseteq L_1^+$, where 
     $$r_i^m = \frac{1}{2^i} \quad \mbox{if} \quad 0 \leq i \leq m, \quad \mbox { and } \quad  r_i^m = 0 \quad \mbox{otherwise}.$$
 Then, $(r^m)_{m\geq 0}\overset{1}\to r$, where $r$ is defined as $r_i = 1/2^i$ for $i \geq 0$ and $r_i = 0$ otherwise.
 Now, $\phi_0 (r^m) = 1$ for each $m\geq 0$ and $\phi_0 (r) = 1/2$. Thus, $\phi(r^m)_{m\geq 0}\not\overset{1}\to \phi(r)$. 

 On the other hand, this allocation rule satisfies all the remaining axioms in the statement. 
 \item The allocation rule $\phi$ on $L_+^1$ defined as:
 
 $$\forall i \in \mathbb{Z}, \quad \phi_i (r) = \frac{r_i}{4} +  \frac{r_{i - 1}}{4}+  \frac{r_{i - 2}}{4},$$
 violates consistency, but satisfies all the remaining axioms in Theorem \ref{charaResult}. 
  \end{itemize}
 
 \bigskip 


\begin{thebibliography}{99}
\bibitem{} Ambec, S., Sprumont, Y. (2002). Sharing a river. \textit{Journal of Economic Theory} 107, 453-462.
\bibitem{} Asheim, G. (1991). Unjust intergenerational allocations. \textit{Journal of Economic Theory} 54, 350-371.
\bibitem{} Asheim, G. (2010). Intergenerational Equity. \textit{Annual Review of Economics} 2, 197-222.
\bibitem{} Asheim, G., Bossert, W., Sprumont, Y., Suzumura, K. (2010). Infinite-horizon choice functions. \textit{Economic Theory} 43, 1-21.
\bibitem{} Billot, A., Qu, X. (2025). Stationary altruism and time consistency. \emph{Journal of Economic Theory} 228, 106038.
\bibitem{BG} van den Brink, R., Gilles, R. (1996). Axiomatizations of the conjunctive permission value for games with permission structures. \textit{Games and Economic Behavior} 12, 113-126.
\bibitem{} Brown, D., Lewis, L. (1981). Myopic economic agents. \emph{Econometrica} 49, 359-368.
\bibitem{Casajus15} Casajus, A. (2015). Monotonic redistribution of performance-based allocations: A case for proportional taxation. \emph{Theoretical Economics} 10, 887-892.
\bibitem{} Chambers, C., Echenique, F. (2018). On multiple discount rates. \emph{Econometrica} 86, 1325-1346.
\bibitem{} Chambers, C., Echenique, F., Miller, A. (2023). Decreasing impatience. \emph{American Economic Journal: Microeconomics} 15, 527-551.
\bibitem{Chambers2021} Chambers, C., Moreno-Ternero, J. (2021). Bilateral redistribution. \emph{Journal of Mathematical Economics} 96, 102517.
\bibitem{} Diamond, P. (1965). The evaluation of infinite utility streams. \textit{Econometrica} 33, 170-177.
\bibitem{} Dietzenbacher, B., Kondratev, A. (2023). Fair and consistent prize allocation in competitions. \textit{Management Science} 69, 3319-3339.
\bibitem{} Dietzenbacher, B., Tamura, Y., Thomson, W. (2024). Partial-implementation invariance and claims problems. \textit{Social Choice and Welfare} 63, 203-229.
\bibitem{} Epstein, L. (1986). Intergenerational consumption rules: an axiomatization of utilitarianism and egalitarianism. \textit{Journal of Economic Theory} 38, 280-297.
\bibitem{} Feng, T. Ke, S. (2018). Social discounting and intergenerational Pareto. \emph{Econometrica} 86, 1537-1567.
\bibitem{} Gudmundsson, J., Hougaard, J., Moreno-Ternero, J., \O sterdal, L. (2025). Optimizing successive incentives: rewarding the past or motivating the future? \textit{Games and Economic Behavior} 153, 10-29.
\bibitem{} Harless, P. (2020). Robust revenue sharing on a hierarchy. \textit{Economic Theory}, in press.
\bibitem{} Hougaard, J., Moreno-Ternero, J., Tvede, M., \O sterdal, L. (2017). Sharing the proceeds from a hierarchical venture. \textit{Games and Economic Behavior} 102, 98-110.
\bibitem{} Hougaard, J., Moreno-Ternero, J., \O sterdal, L. (2022). Optimal management of evolving hierarchies. \textit{Management Science} 68, 6024-6038.
\bibitem{} Ju, B-G. (2013). Coalitional manipulations on networks. \textit{Journal of Economic Theory} 148, 627-662.
\bibitem{} Ju, B-G., Jeong, S., Song, H., Jo, K. (2024). Revenue sharing on hierarchies. Center for Distributive Justice Discussion Paper 2024-03. Seoul National University.  
\bibitem{} Ju, B-G., Miyagawa, E., Sakai, T. (2007). Non-manipulable division rules in claim problems and generalizations. \textit{Journal of Economic Theory} 132, 1-26.
\bibitem{} Lauwers, L. (2010). Ordering infinite utility streams comes at the cost of a non-Ramsey set. \textit{Journal of Mathematical Economics} 46, 32-37.
\bibitem{} Kondratev, A., Ianovski, E., Nesterov, A. 2024. How should we score athletes and candidates: geometric scoring rules. \textit{Operations Research} 72, 2507-2525.
\bibitem{} Koopmans, T. (1960). Stationary ordinal utility and impatience. \textit{Econometrica} 28, 287-309.
\bibitem{} Mart\'{\i}nez, R., Moreno-Ternero J. (2025). The allocation of riparian water rights. \textit{Environmental and Resource Economics} 88, 3841-3872. 
\bibitem{} Mart\'{\i}nez, R., Moreno-Ternero J. (2026). The geometric adjudication of water rights in international rivers. ArXiv 2601.04150.
\bibitem{} Moreno-Ternero J., Roemer J. (2006). Impartiality, solidarity, and priority in the theory of justice. \textit{Econometrica} 74, 1419-1427.
\bibitem{} Moreno-Ternero, J. Roemer, J. (2008). The veil of ignorance violates priority. \textit{Economics and Philosophy} 24, 233-257.
\bibitem{} Moulin, H. (2004). Fair division and collective welfare. MIT Press.
\bibitem{ONeill1982} O'Neill, B. (1982). A problem of rights arbitration from the Talmud. \emph{Mathematical Social Sciences} 2, 345-371.
\bibitem{} Piacquadio, P. (2014). Intergenerational egalitarianism. \textit{Journal of Economic Theory} 153, 117-127.
\bibitem{} Pivato, M., Fleurbaey, M. (2024). Intergenerational equity and infinite-population ethics: A survey. \textit{Journal of Mathematical Economics} 113, 103021.
\bibitem{} Rawls, J. (1971). A Theory of Justice. Harvard University Press. Cambridge, MA.
\bibitem{} Roemer J. (1986). Equality of resources implies equality of welfare. \textit{Quarterly Journal Economics} 101, 751-784.
\bibitem{} Svensson, L.-G. (1980). Equity among generations. \emph{Econometrica} 1251-1256.
\bibitem{} Thomson, W. (2012). On the axiomatics of resource allocation: interpreting the consistency principle. \textit{Economics and Philosophy} 28, 385-421.
\bibitem{} Thomson, W., (2023). The Axiomatics of Economic Design, Vol. 1. Studies in Choice and Welfare. Springer, Cham.
\bibitem{Young1988} Young, H.~P. (1988). Distributive justice in taxation. \emph{Journal of Economic Theory} 44, 321-335.

\end{thebibliography}
\end{document}